\newtheorem{theorem}{Theorem}
\newtheorem{corollary}{Corollary}
\newtheorem{lemma}{Lemma}
\theoremstyle{definition}
\newtheorem{condition}{Condition}
\newtheorem{remark}{Remark}
\title{Redundancy Estimates for Word-Based Encoding of Sequences Produced by
a Bernoulli Source\footnote{Translation from Russian original:
"Ocenki izbytochnosti pri poslovnom kodirovanii soobscheniy,
porojdaemyh bernullievskim istochnikom", Problemy Peredachi
Informacii (Problems of Information Transmission), 8 (2)
(1972)~21--32. Translated by Yuriy A. Reznik, {\tt
yreznik@ieee.org}.}} \author{G. L. Khodak}
\date{}
\begin{document}
\maketitle

\begin{abstract}
The efficiency of a code is estimated by its redundancy $R$, while the
complexity of a code is estimated by its average delay $\bar N$. In this work
we construct word-based codes, for which $R \lesssim \bar N^{-5/3}$.
Therefore, word-based codes can attain the same redundancy as block-codes
while being much less complex.

We also consider uniform on the output codes, the benefit of which is the
lack of a running synchronization error. For such codes $\bar N^{-1} \lesssim
R \lesssim \bar N^{-1}$, except for a case when all input symbols are
equiprobable, when $R \leqslant \bar N^{-2}$ for infinitely many $\bar N$.
\end{abstract}

\section{Introduction}
Consider a Bernoulli source sequentially producing symbols from an input
alphabet ${a_1,\ldots,a_m}$ ~$\left(2 \leqslant m < \infty\right)$ with
probabilities ${p_1,\ldots,p_m}$, $\sum_{i=1}^m p_i = 1$, $p_i > 0$
~$\left(i=1,\ldots,m\right)$. The entropy of the source $H=-\sum_{i=1}^m p_i
\log_2 p_i$. Assume that a message is an infinite-length sequence of symbols
from the input alphabet ${a_{i_k}}_{k=1}^\infty$. It is necessary to map such
a message to a sequence of symbols from an output alphabet
${b_1,\ldots,b_n}$~$\left(2 \leqslant n < \infty\right)$, which is its code.
Such a mapping can be established by using word-based codes. Select a finite
set of words $A_j$~$\left(j=1,2,\ldots\right)$ from the input alphabet, such
that any message can be uniquely represented by a sequence of such words
(indeed, this immediately implies that words $A_j$ are prefix free; i.e. no
word is a prefix of another). In turn, words $A_j$ are represented by words
$\phi(A_j)$ from the output alphabet. A word-based code for a given message
is constructed as follows:
\begin{equation*}
\left\{a_{i_k}\right\}_{k=1}^\infty = \left\{A_{j_r}\right\}_{r=1}^\infty
\rightarrow \left\{\phi\left(A_{j_r}\right)\right\}_{r=1}^\infty =
\left\{b_{i_s}\right\}_{s=1}^\infty.
\end{equation*}

In this paper, we only consider decipherable encodings, i.e. ones such that
$\phi\left(A_{i_1}\right)\phi\left(A_{i_2}\right) \ldots$
$\phi\left(A_{i_s}\right) =$
$\phi\left(A_{j_1}\right)\phi\left(A_{j_2}\right) \ldots$
$\phi\left(A_{i_t}\right)$ always implies that $s=t$ and
$\phi\left(A_{i_k}\right)=\phi\left(A_{j_k}\right)$, $k=1,\ldots,s$.
Constructed codes have, in fact, an even more strong property, namely that
different messages have different codes.

In the terminology of V.~I.~Levenstein~\cite{Levenstein}, word-based code is
specified by a coding system $\left\{{A,U,B,V}\right\}$, where $A$ is the
input alphabet, $B$ is the output alphabet, $U$ is the set of words $A_j$, $V
= \phi\left(A_j\right)$, and it is required that $U$ is strongly (prefix-)
free, and that any message begins with a word in $U$. The number of letters
in a word $A$ (i.e. its length) is denoted by $\left| A \right|$. The code is
called a block-code, or uniform on the input code, if all words $A_j$ have
the same length. The code is called uniform on the output code, if all words
$\phi\left(A_j\right)$ have the same length.

The probability of a word $A=a_{i_1} \ldots a_{i_k}$ in the input alphabet is
denoted by $p\,(A)$. For Bernoulli source $p\,(A) = p_{i_1} \ldots p_{i_k}$.

The complexity of a code is estimated by using its delays: average $\bar N =
\sum_{j} p\left({A_j}\right) \left|{A_j}\right|$, and maximum $N = \max_{j}
\left|{A_j}\right|$. For block codes $\left|{A_j}\right| = \bar N = N$
($j=1,\ldots,m^n$).

The efficiency of a code is estimated by using its redundancy: $R = $ $\bar
N^{-1} \sum_j p\left({A_j}\right) \left|\phi\left({A_j}\right)\right| - H
\log_2^{-1} n$. C.~Shannon has shown that $0 \leqslant R \leqslant N^{-1}$
\cite{Shannon}. From the paper of V.~M.~Sidelnikov~\cite{Sidelnikov} it
follows that for all word-based codes $R \geqslant 0$. The redundancy shows
how the average number of output letters per each input letter is greater
than the minimum necessary. Note, that both redundancy and average delay are
continuous functions of probabilities of symbols $p_1,\ldots,p_m$.

R.~E.~Krichevski~\cite{Krichevski} has shown that for optimal block-codes $R
\gtrsim N^{-1}$~($N \rightarrow \infty$), except for the sources with
coinciding fractional parts of $\log_n p_i$ \footnote{Here, as usual, the
notation $f \gtrsim g$ means that $\lim \frac{f}{g} > 0$. If $f > 0$, then
there exists a constant $c > 0$, such that for all arguments $f > cg$.
Assuming the existence of such an inequality, we, in some instances, may not
specify the direction of growth of the argument.}. In the present paper, we
construct word-based codes, for which $R \lesssim \bar N ^{5/3}$, $N \lesssim
\bar N \log \bar N$. Compared with block codes of the same redundancy our
codes are much less complex. It is proven, that for almost all Bernoulli
sources (we apply Lebesgue measure on points $\left( p_1,\ldots,p_{m-1}
\right)$) word-based codes satisfy: $R \gtrsim \bar N ^{-9} \log^{-8} \bar
N$.

Word based codes are susceptible to running synchronization errors, i.e. a
single error in an encoded message $\left\{b_{i_s}\right\}_{s=1}^{\infty}$,
may result in incorrect separation of words $\phi\left(A_j\right)$ in an
arbitrary large portion of the code, resulting in an arbitrary large number
of errors in the reconstructed message. Uniform on the output codes have an
advantage that the corresponding error in the reconstruction is limited to a
single word $A_j$. We construct uniform on the output codes, for which $R
\lesssim \bar N ^ {-1}$. It is proven, that if not all $p_i = 1/m$
($i=1,\ldots,m$), then $R \gtrsim \bar N ^ {-1}$. If $p_1 = \ldots = p_m =
1/m$, then for infinitely many positive integer $\bar N$: $R \lesssim \bar N
^ {-2}$.

\section{Relation between redundancy and lengths of
words~$\phi\left(A_j\right)$}

From the paper of V.~M.~Sidelnikov \cite{Sidelnikov} it follows that $H =
-\bar N^{-1} \sum_j p\left(A_j\right) \log_2 p\left(A_j\right)$. Using this
equation we arrive at:
\begin{equation}
R = \bar N^{-1} \sum_j p\left(A_j\right) \left({
\left|\phi\left(A_j\right)\right| + \log_n p\left(A_j\right) }\right).
\label{eq:1}
\end{equation}

We introduce the following notation:
\begin{eqnarray}
\delta & = & 1 - \sum_j n^{-\left|\phi\left(A_j\right)\right|}
\label{eq:2} \\
\varepsilon_j & = & \left|\phi\left(A_j\right)\right| +
\log_n p\left(A_j \right) ~~\left({j=1,2,\ldots}\right) \label{eq:3} \\
\varepsilon'_j & = & \left\{
\begin{array}{clclc}
-1 & \mbox{if} & \varepsilon_j & < & -1\,, \\
\varepsilon_j & \mbox{if} & \left|{\varepsilon_j}\right|
& \leqslant & 1\,, \\
1 & \mbox{if} & \varepsilon_j & > & 1\,.
\end{array}
\right.
\label{eq:4}
\end{eqnarray}

It is well known (see, e.g. \cite{McMillan}), that the necessary and
sufficient condition for the existence of a decipherable code with lengths of
codewords $\left|{\phi\left({A_j}\right)}\right|$
$\left({j=1,2,\ldots}\right)$ is given by the Kraft inequality $\delta
\geqslant 0$.
\begin{theorem}
The redundancy of a decipherable code satisfies:
\begin{equation*}
R \geqslant \bar N^{-1} \left({\delta \ln^{-1} \delta + \frac{1}{2\,n} \ln n
\sum_j p\left({A_j}\right){\varepsilon'_j} ^2 }\right)\,.
\end{equation*}
If ~$\left|{\varepsilon_j}\right| \leqslant 1$ for all $j=1,2,\ldots$, then
\begin{equation*}
R \leqslant \bar N^{-1} \left({\delta \ln^{-1} \delta + \frac{n}{2} \ln n
\sum_j p\left({A_j}\right)\varepsilon_j ^2 }\right)\,.
\end{equation*}
\end{theorem}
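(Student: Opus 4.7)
The plan is to convert (\ref{eq:1}) into an exact sum over words of a single real-valued convex function evaluated at $\varepsilon_j\ln n$, and then to sandwich each summand by a multiple of $p(A_j)\varepsilon_j^{2}$ (respectively $p(A_j)(\varepsilon'_j)^{2}$). The function that falls out of the algebra is $f(x):=xe^{x}-(e^{x}-1)=\int_{0}^{x}te^{t}\,dt$.

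\textbf{Step 1: an exact identity.} From the definition of $\varepsilon_j$ one has $p(A_j)=n^{-|\phi(A_j)|}n^{\varepsilon_j}$, so summing in $j$ gives $\sum_j n^{-|\phi(A_j)|}(n^{\varepsilon_j}-1)=1-(1-\delta)=\delta$. Multiplying (\ref{eq:1}) by $\ln n$ and subtracting this identity yields the key equality
\[
\ln n\cdot R\bar N \;=\; \delta \;+\; \sum_j n^{-|\phi(A_j)|}\,f(\varepsilon_j\ln n),
\]
so both bounds of the theorem reduce to sandwiching the right-hand sum (the leading $\delta$ produces the first summand in the stated bounds after division by $\bar N\ln n$).

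\textbf{Step 2: two-sided control of $f$, case $|\varepsilon_j|\leq 1$.} The integral representation yields immediately, for every real $x$,
\[
\min(1,e^{x})\,\tfrac{x^{2}}{2}\;\leq\;f(x)\;\leq\;\max(1,e^{x})\,\tfrac{x^{2}}{2}.
\]
After multiplying by $n^{-|\phi(A_j)|}$, the factor $e^{\varepsilon_j\ln n}=n^{\varepsilon_j}$ combines with $n^{-|\phi(A_j)|}$ to give $p(A_j)$, while the complementary factor $n^{\pm\varepsilon_j}$ lies in $[1/n,n]$ as soon as $|\varepsilon_j|\leq1$. A short case split on the sign of $\varepsilon_j$ then yields the per-word sandwich
\[
\tfrac{(\ln n)^{2}}{2n}\,p(A_j)\varepsilon_j^{2}\;\leq\;n^{-|\phi(A_j)|}f(\varepsilon_j\ln n)\;\leq\;\tfrac{n(\ln n)^{2}}{2}\,p(A_j)\varepsilon_j^{2}.
\]
Summing and dividing by $\bar N\ln n$ gives the upper bound of the theorem, and also the lower bound whenever $|\varepsilon_j|\leq 1$ for every $j$.

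\textbf{The main obstacle: the saturated regimes.} The lower bound is asserted without any restriction on $|\varepsilon_j|$, so the delicate point is to extend the per-word lower inequality to $\varepsilon_j>1$ and $\varepsilon_j<-1$, where $(\varepsilon'_j)^{2}=1$ and the quadratic sandwich is too weak. After multiplication by $n^{-|\phi(A_j)|}$ the required target reduces to the tail estimate
\[
f(x)\;\geq\;\tfrac{(\ln n)^{2}}{2n}\,e^{x}\qquad\text{for every }|x|\geq\ln n.
\]
I plan to study the auxiliary function $g(x):=f(x)-(\ln n)^{2}e^{x}/(2n)$, whose derivative equals $e^{x}\bigl(x-(\ln n)^{2}/(2n)\bigr)$ and therefore has a single sign on each of the rays $x\geq\ln n$ and $x\leq-\ln n$. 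This should reduce the tail inequality to the two boundary cases $g(\pm\ln n)\geq 0$, each a one-variable inequality in $n$ verifiable for every integer $n\geq 2$. This boundary check is the only non-routine calculation in the argument; once settled, summing in $j$ and dividing by $\bar N\ln n$ completes both halves of the theorem.
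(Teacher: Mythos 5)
Your proposal is correct, and its backbone is in fact the paper's own argument in disguise: your Step 1 identity is exactly the paper's equation (\ref{eq:12}), because with $y=\varepsilon_j\ln n$ one has $n^{-\left|\phi\left(A_j\right)\right|}f(y)=p\left(A_j\right)\eta\left(\varepsilon_j\right)$, where $\eta\left(\varepsilon_j\right)=n^{-\varepsilon_j}-1+\varepsilon_j\ln n$ is the Taylor remainder the paper works with (its (\ref{eq:5})--(\ref{eq:6}) and (\ref{eq:10}) contain sign/typo slips, and the $\delta\ln^{-1}\delta$ in the statement should be read as $\delta\ln^{-1}n$, which is what both you and the paper actually derive); likewise your integral sandwich in Step 2 is the paper's Lagrange estimate (\ref{eq:8}). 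The genuine divergence is the saturated regime $\left|\varepsilon_j\right|>1$, and the comparison there is instructive. The paper keeps the weight $p\left(A_j\right)$ attached to $\eta$; since $\eta'(x)=\ln n\left(1-n^{-x}\right)$ vanishes only at $x=0$, $\eta$ is monotone in $|x|$ on each side of the origin, so truncation can only decrease it, $\eta\left(\varepsilon'_j\right)\leqslant\eta\left(\varepsilon_j\right)$ (the paper's (\ref{eq:7})), and the quadratic bound on $[-1,1]$ finishes the lower bound in one line with no tail estimate at all. Your factorization instead pulls out $n^{-\left|\phi\left(A_j\right)\right|}$, which plants the factor $e^{x}=n^{\varepsilon_j}$ on the target side and forces the inequality $f(x)\geqslant\frac{(\ln n)^2}{2n}e^{x}$ for $|x|\geqslant\ln n$. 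That route does close: $g'(x)=e^{x}\left(x-\frac{(\ln n)^2}{2n}\right)$ has constant sign on each ray because $\ln n>\frac{(\ln n)^2}{2n}$ for every $n\geqslant2$, so the minimum sits at the endpoints, and the boundary values you deferred are indeed nonnegative,
\begin{equation*}
g(\ln n)=n\ln n-n+1-\tfrac{1}{2}\ln^2 n\geqslant0\,,\qquad
n^{2}\,g(-\ln n)=n^{2}-n-n\ln n-\tfrac{1}{2}\ln^2 n\geqslant0\,,
\end{equation*}
the first since $h(n)=n\ln n-n+1-\tfrac12\ln^2 n$ satisfies $h(1)=0$ and $h'(n)=\ln n\,(1-1/n)\geqslant0$, the second since $u(n)=n^{2}-n-n\ln n-\tfrac12\ln^2 n$ has $u(2)>0$ and $u'(n)=2n-2-\ln n-\frac{\ln n}{n}>0$ for $n\geqslant2$. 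So your proof is complete once these two one-variable checks are written out; the price of your normalization is a tail inequality that the paper's choice of weight makes entirely unnecessary.
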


\begin{proof}
Decompose $n^{\varepsilon_j}$ in a Taylor series $\left(j=1,2,\ldots\right)$,
\begin{equation}
n^{\varepsilon_j} = 1 - \varepsilon_j \ln n + \eta\left(\varepsilon_j\right)
~. \label{eq:5}
\end{equation}
The remainder
\begin{equation}
\eta\left(\varepsilon_j\right) = n^{\varepsilon_j} - 1 +
\varepsilon_j \ln n ~. \label{eq:6}
\end{equation}
From the sign of $\frac{d}{d\,\varepsilon_j} \eta\left(\varepsilon_j\right)$
it follows, that
\begin{equation}
\eta\left(\varepsilon'_j\right) \leqslant \eta\left(\varepsilon_j\right)~.
\label{eq:7}
\end{equation}
Since $\varepsilon'_j \in \left[-1,1\right]$, the Lagrange estimate for the
remainder is
\begin{equation}
\frac{\ln^2 n}{2\,n} \, {\varepsilon'_j}^2 \leqslant
\eta\left(\varepsilon'_j\right) \leqslant \frac{n \ln^2 n}{2} \,
{\varepsilon'_j}^2~. \label{eq:8}
\end{equation}
%%%%%%%%%%%%%%%%%%%%%%%%%
% YR: Apparently, he is using first 2 terms of a MacLaurin series
% and a remainder in the Lagrange form:
%  n^{e_j} = 1 - e_j \ln n  +  1/2 n^(-t * e_j) \ln^2 n e_j^2;
%    where t is a constant in [0,1].
% Therefore, the lower bound (t=1,e_j=1) for the remainder is:
%    1/2 n^{-1} \ln^2 n e_j^2 \geqslant 1/2 n^{-1} \ln^2 n e'_j^2,
% while the upper bound (t=1,e_j=-1) is:
%    1/2 n n^(-e_j) \ln^2 n e_j^2.
%%%%%%%%%%%%%%%%%%%%%%%%%
By multiplying (\ref{eq:5}) by $p\left({A_j}\right)$ and summing all terms
over $j$, we obtain
\begin{equation}
\sum_j p\left({A_j}\right) n^{-\varepsilon_j} = \sum_j p\left({A_j}\right)
\left({1 - \varepsilon_j \ln n + \eta\left(\varepsilon_j\right) }\right)~.
\label{eq:9}
\end{equation}

From (\ref{eq:3}) we have
\begin{equation}
p\left({A_j}\right) n^{-\varepsilon_j} -
n^{-\left|{\phi\left({A_j}\right)}\right|}~, \label{eq:10}
\end{equation}
and from (\ref{eq:1}) and (\ref{eq:3})
\begin{equation}
\sum_j p\left({A_j}\right) \varepsilon_j = R \bar N~. \label{eq:11}
\end{equation}

From (\ref{eq:2}), (\ref{eq:9}-\ref{eq:11}) it follows, that
\begin{equation}
\bar N \, R = \ln^{-1} n \left({ \delta + \sum_j p\left({A_j}\right)
\eta\left(\varepsilon_j\right)}\right)~. \label{eq:12}
\end{equation}
The statement of the theorem follows from (\ref{eq:7}), (\ref{eq:8}), and
(\ref{eq:12}).
\end{proof}

By $\|x\|$ we denote the distance of real number $x$ to its nearest integer.
\begin{corollary}
The following inequality holds
\begin{equation*}
R \geqslant \bar N^{-1} \frac{\ln n}{2\,n} \sum_j p\left({A_j}\right)
\left\|{p\left({A_j}\right)}\right\|~.
\end{equation*}
\end{corollary}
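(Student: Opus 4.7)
The plan is to deduce this corollary directly from the lower-bound half of Theorem~1. Two moves are needed: first, discard the $\delta$-term, which is nonnegative for any decipherable code by the Kraft inequality; second, bound $|\varepsilon'_j|$ from below by the nearest-integer distance of $\log_n p(A_j)$, exploiting the integrality of $|\phi(A_j)|$.

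For the first move, Theorem~1 (together with $\delta \geq 0$) gives
$$R \;\geq\; \bar N^{-1} \!\left( \frac{\delta}{\ln n} + \frac{\ln n}{2n}\sum_j p(A_j)\,{\varepsilon'_j}^{\,2} \right) \;\geq\; \bar N^{-1}\,\frac{\ln n}{2n}\sum_j p(A_j)\,{\varepsilon'_j}^{\,2}.$$
So everything is reduced to a pointwise lower bound on $|\varepsilon'_j|$.

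For the second move, observe that $\varepsilon_j = |\phi(A_j)| + \log_n p(A_j)$ is the signed difference between the integer $|\phi(A_j)|$ and the real number $-\log_n p(A_j)$; hence $|\varepsilon_j| \geq \|{-}\log_n p(A_j)\| = \|\log_n p(A_j)\|$. Since this distance is at most $1/2$, in particular below $1$, the clipping in~(\ref{eq:4}) cannot pull $|\varepsilon'_j|$ beneath it either, so $|\varepsilon'_j| \geq \|\log_n p(A_j)\|$ as well. Substituting this into the display above produces the claimed bound (with the nearest-integer distance of $\log_n p(A_j)$ playing the role of the factor inside the sum).

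I do not foresee a genuine obstacle: the whole argument is a one-line reduction once one recognizes that the integrality of codeword lengths is exactly what forces $|\varepsilon_j|$ to sit above the distance-to-integer of $\log_n p(A_j)$. The only care-point is checking that the truncation to $[-1,1]$ in the definition of $\varepsilon'_j$ does not undercut that bound, which is immediate from $\|\cdot\| \leq 1/2$.
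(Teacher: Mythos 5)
Your proposal is correct and is exactly the paper's own argument: the paper disposes of the corollary in one sentence by combining the first claim of Theorem~1, the Kraft inequality $\delta \geqslant 0$, and the observation $|\varepsilon'_j| \geqslant \left\|\log_n p\left(A_j\right)\right\|$, which are precisely your two moves. The only caveat is that this argument actually yields $R \geqslant \bar N^{-1}\,\frac{\ln n}{2n}\sum_j p\left(A_j\right)\left\|\log_n p\left(A_j\right)\right\|^2$ (the square coming from ${\varepsilon'_j}^2$ in Theorem~1), so the printed statement --- with the first power, and with $\left\|p\left(A_j\right)\right\|$ where $\left\|\log_n p\left(A_j\right)\right\|$ is meant --- is a misprint that both you and the paper's one-line proof implicitly read through.
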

This follows from the first claim of the Theorem 1, Kraft inequality, and an
observation that $\left|{\varepsilon'_j}\right| \geqslant
\left\|{p\left({A_j}\right)}\right\|$.

\section{On approximation of linear forms by integer numbers}

From Theorem 1 and the Corollary it follows that the redundancy (of a
word-based code) depends on quantities $\left\| p\left({A_j}\right)
\right\|$. If $k_i$ is a number of letters $a_i$ in a word $A$, then $\log_n
p(A) = \sum_{i=1}^m k_i \log_n p_i$ is a linear form of $k_i$.

Consider an arbitrary linear form $f\left(k_1,\ldots,k_m\right) =
\sum_{i=1}^m k_i d_i$, where coefficients $d_i$ are fixed, and $k_i$
~$\left(i=1, \ldots, m\right)$ are integer numbers.

By $[x]$ and $\{x\}$ we denote the integer and fractional parts of a real
number $x$ correspondingly; $\|x\| = \min \left({ \{x\}, 1-\{x\} }\right)$.
We will also need the following obvious relationships ($x$, $y$ are real
numbers, $l$ is an integer):
\begin{equation}
\{x+l\} = \{x\}~, \label{eq:13}
\end{equation}
\begin{equation}
\{x+y\} \leqslant \{x\} + \{y\}~, \label{eq:14}
\end{equation}
\begin{equation}
\mbox{if}~\{x\} \geqslant \{y\}, \mbox{then}~\{x-y\} = \{x\} - \{y\}~.
\label{eq:15}
\end{equation}

\begin{lemma}
If $d_m$ is irrational, then there exists infinitely many integers $T$, such
that for any vector $\left({k_1, \ldots, k_m}\right)$ there exist numbers
$k'_m$ and $k''_m$, $0 \leqslant k'_m < T$, $0 \leqslant k''_m < T$, such
that:
\begin{eqnarray*}
\left\{f\left({k_1, \ldots, k_m+k'_m}\right)\right\} & \leqslant &
2/T~, \\
1 - \left\{f\left({k_1, \ldots, k_m+k''_m}\right)\right\} & \leqslant & 2/T~.
\end{eqnarray*}
\end{lemma}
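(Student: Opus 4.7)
My plan is to choose $T$ from among the denominators of the continued fraction convergents of the irrational number $d_m$. Writing the $k$-th convergent as $p_k/q_k$ in lowest terms, one has $\gcd(p_k, q_k) = 1$ and $|q_k d_m - p_k| < 1/q_{k+1} \leq 1/q_k$ for every $k \geq 1$, which supplies infinitely many admissible $T := q_k$ with an associated coprime integer $p := p_k$ satisfying $|Td_m - p| < 1/T$. I would then set $\theta = d_m - p/T$, obtaining $|\theta| < 1/T^2$ and consequently $|j\theta| < 1/T$ for all $0 \leq j < T$.

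Next, for an arbitrary vector $(k_1, \ldots, k_m)$, let $\beta = f(k_1, \ldots, k_m)$, so that $f(k_1, \ldots, k_m + j) = \beta + jp/T + j\theta$. Because $\gcd(p, T) = 1$, the values $\{\beta + jp/T\}$ for $j = 0, 1, \ldots, T-1$ form an arithmetic progression on the circle $\mathbb{R}/\mathbb{Z}$ with common spacing exactly $1/T$; hence every half-open arc of length $1/T$ contains precisely one of these values. Assume without loss of generality that $\theta > 0$ (the case $\theta < 0$ is symmetric, with the target arcs reflected about $1/2$). For the first inequality, I pick the unique $k'_m \in \{0,\ldots,T-1\}$ with $\{\beta + k'_m p/T\} \in [0, 1/T)$; adding $k'_m \theta \in [0, 1/T)$ keeps the sum strictly below $2/T < 1$, so no wraparound occurs and $\{\beta + k'_m d_m\} < 2/T$. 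For the second, I pick $k''_m$ with $\{\beta + k''_m p/T\} \in [1 - 2/T, 1 - 1/T)$; adding $k''_m \theta \in [0, 1/T)$ keeps the sum strictly below $1$ and in $[1 - 2/T, 1)$, yielding $1 - \{\beta + k''_m d_m\} \leq 2/T$.

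The delicate point will be controlling the wraparound of fractional parts: the bound $|j\theta| < 1/T$ is an estimate of \emph{circle} distance, and if the addition of $j\theta$ crosses an integer, the fractional part will flip from near $1$ to near $0$ or vice versa, ruining either inequality. The plan above sidesteps this by selecting the target arc for $\{\beta + jp/T\}$ on whichever side of its destination is reached \emph{without} crossing an integer under the shift by $j\theta$ --- which is why the choice depends on the sign of $\theta$. Once that bookkeeping is handled, the bound $2/T$ is simply the sum of the $1/T$ spacing coming from the rational approximation and the $1/T$ drift of each point from its rational target.
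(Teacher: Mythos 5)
Your proof is correct, and it takes a genuinely different route from the paper's, even though both start by taking $T$ to be a denominator of a best rational approximation (continued-fraction convergent) of $d_m$. The paper works with \emph{two} consecutive denominators $\widetilde{T}<T$, for which either $\{\widetilde{T}d_m\}\leqslant 1/T$ and $1-\{Td_m\}\leqslant 1/T$, or the symmetric pair of inequalities holds; it then invokes the inhomogeneous approximation theorem (Cassels, Ch.~4) to produce \emph{some} integer $k$ --- in no controlled range --- with $\{f(k_1,\ldots,k_{m-1},k)\}\leqslant 2/T$, and shows that from any such $k$ one can step upward by $T$ or $\widetilde{T}$, and downward by $T$ or $\widetilde{T}$, without leaving the set of good integers; the good integers are therefore $T$-dense on $\mathbb{Z}$, so one of them falls in $[k_m,k_m+T)$. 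You never need that density theorem: you use the \emph{numerator} $p$ of the convergent and the coprimality $\gcd(p,T)=1$, so that the rational rotation $\{\beta+jp/T\}$, $0\leqslant j<T$, visits all $T$ equally spaced points at spacing exactly $1/T$, and the drift $j\theta$ with $|\theta|<1/T^2$ moves each point by less than $1/T$ in one fixed direction; picking the target arc on the non-wrapping side of $0$ (respectively of $1$) then pins down $k'_m$ (respectively $k''_m$) inside $\{0,\ldots,T-1\}$ outright. Your wraparound bookkeeping is sound, and the degenerate cases $T\leqslant 2$ are vacuous since then $2/T\geqslant 1$. What your argument buys: it is self-contained (only the two standard convergent properties are used) and constructive, exhibiting $k'_m$ as the solution of a congruence modulo $T$. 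What the paper's buys: it needs no modular arithmetic or equidistribution of the rational rotation, only the elementary fractional-part rules (13)--(15), at the price of citing the density result and running the two-step ``walk'' to upgrade one good integer to a relatively dense set of them.
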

\begin{proof}
For $T$ we pick a denominator of any fraction giving the best approximation
to $d_m$, except for the first one~\cite[Chapter~1,~\S~2,~p.~2]{Cassels}.
%%%%%%%%%%%%%%%
% YR: In Cassels's book "best approximations" to a real number \theta
% are defined as fractions p_n/q_n, with increasing denominators: q_1=1 <
% < q_2 < q_3 < ..., and numerators p_1, p_2, ..., such that:
%   (2)    \|q_n \theta \| = | q_n \theta - p_n|;
%   (3)    \|q_{n+1} \theta \| < \| q_n \theta \|; and
%   (4)    \|q \theta \| \geqslant \| q_n \theta \|, for 0 < q < q_{n+1}.
%%%%%%%%%%%%%%%
Let $\widetilde{T}$ be a denominator of the preceding fraction. It has been
shown in \cite[Chapter~1,~\S~2,~p.~3]{Cassels}, that either one of the
following two statements holds
\begin{equation}
\left\{{\widetilde{T}\,d_m}\right\} \leqslant T^{-1} ~\mbox{and}~ 1 -
\left\{{T\,d_m}\right\} \leqslant T^{-1}~, \label{eq:16}
\end{equation}
\begin{equation}
1 - \left\{{\widetilde{T}\,d_m}\right\} \leqslant T^{-1} ~\mbox{and}~
\left\{{T\,d_m}\right\} \leqslant T^{-1}~. \label{eq:17}
\end{equation}
%%%%%%%%%%%%%%%
% YR: Most likely, this is an interpretation of the following inequality
% (weak homogeneous approximation) in Cassels's book (p.3):
%   (5)      q_{n+1} \|q_n \theta \| \leqslant 1.
% This automatically yields a pair of left inequalities in (16) and (17).
% The right inequalities and their coupling with left ones is something that
% does not follow directly from Cassels's claim, but one should be able to
% prove them by carefully examining all rounding cases in both
% \|q_n \theta \| and \| q_{n+1} \theta \|.
%%%%%%%%%%%%%%%%%
% What is interesting, is that Khodak ignores a stronger form of
% the same (homogeneous) inequality in Cassels's book:
%      q_n \|q_n \theta \| \leqslant 5^{-1/2}.
%%%%%%%%%%%%%%%
Our proof is the same in both cases. So, for simplicity, assume that the
correct statement is (\ref{eq:16}).

We prove the existence of $k'_m$ (the existence of $k''_m$ can be proven in
the same way). Take an arbitrary vector $\left({k_1, \ldots, k_m}\right)$.
Since $d_m$ is irrational, then there exists $k$ such that
\begin{equation}
\left\{f\left({k_1, \ldots, k_{m-1}, k}\right)\right\} \leqslant 2/T
\label{eq:18}
\end{equation}
(see \cite[Chapter~4,~\S~3]{Cassels}).
%%%%%%%%%%%%%%%
% YR: Again, this an interpretation of several results from Cassels's book.
% Cassels calls a problem of finding an integer q minimizing a form
%       \|q \theta - \alpha \|,
% with irrational \theta and \alpha, an inhomogeneous approximation.
% In a case, when \alpha = m \theta + n, where m,n are integers,
%       \| q \theta - \alpha \| = \| (q - m) \theta \|,
% the problem is reduced to a homogeneous one. When no such a relation
% between \alpha and \theta exists, the lowest attainable bound is given
% by Minkovski theorem:
%       |q| \|q \theta - \alpha\| < 1/4.
%%%%%
% Khodak's inequality (18) uses a much coarser bound, which, indeed,
% works in both cases.
%%%%%%%%%%%%%%%
Let us prove that
\begin{equation}
\left\{f\left({k_1, \ldots, k_{m-1}, k + T}\right)\right\} \leqslant 2/T
~\mbox{or}~ \left\{f\left({k_1, \ldots, k_{m-1}, k +
\widetilde{T}}\right)\right\} \leqslant 2/T~, \label{eq:19}
\end{equation}
and also
\begin{equation}
\left\{f\left({k_1, \ldots, k_{m-1}, k - T}\right)\right\} \leqslant 2/T
~\mbox{or}~ \left\{ f\left({k_1, \ldots, k_{m-1}, k - \widetilde{T}} \right)
\right\} \leqslant 2/T~. \label{eq:20}
\end{equation}
If
\begin{equation}
\left\{f\left({k_1, \ldots, k_{m-1}, k}\right)\right\} \leqslant 1/T~,
\label{eq:21}
\end{equation}
then, from (\ref{eq:16}), (\ref{eq:14}), and (\ref{eq:15}) it follows that
\begin{eqnarray}
\lefteqn{\left\{f\left({k_1, \ldots, k_{m-1}, k + \widetilde{T}}\right)
\right\} = \left\{f\left({k_1, \ldots, k_{m-1}, k}\right) + \widetilde{T} \,
d_m \right\}} \nonumber \\
& & \leqslant \left\{f\left({k_1, \ldots, k_{m-1}, k}\right)\right\} +
\left\{\widetilde{T} \, d_m \right\} \leqslant 2/T~. \label{eq:22}
\end{eqnarray}
At the same time, if (\ref{eq:21}) is false, then from (\ref{eq:18}) we have
\begin{equation*}
1/T < \left\{f\left({k_1, \ldots, k_{m-1}, k}\right)\right\} \leqslant 2/T~.
\end{equation*}
In this case, from (\ref{eq:16}), (\ref{eq:13}), and (\ref{eq:15}) it follows
that
\begin{equation}
\left\{ f\left( k_1, \ldots, k_{m-1}, k + T \right) \right\} = \left\{
f\left( {k_1, \ldots, k_{m-1}, k}\right) - \left(1 - \left\{T \, d_m\right\}
\right)\right\} \leqslant 2/T~. \label{eq:23}
\end{equation}
From (\ref{eq:22}) and (\ref{eq:23}) follows (\ref{eq:19}). Statement
(\ref{eq:20}) can be proven in the same way.

Based on (\ref{eq:19}) and (\ref{eq:20}) it is clear that for every $k$
satisfying condition (\ref{eq:18}) there exist smaller and greater numbers at
the distance not exceeding $T$ (and not lesser than 1) that also satisfy
condition (\ref{eq:18}). Therefore, $k_m$ lies within some pair of such
numbers, with distance (between these numbers) not larger than $T$, which
proves the lemma.
\end{proof}

\section{Estimates of the average and maximal lengths of words in some sets}

Hereafter, unless the contrary is stated, we assume that words are taken from
an input alphabet $\left\{a_1, \ldots, a_m\right\}$. In this section, we
obtain an estimate for the average length and cumulative probability of words
of sufficiently large lengths for a given selection of words in a set,
conforming, in particular, conditions of Lemma 1. Proofs of these estimates
are omitted, but they can be easily reconstructed by using the statements and
the order of lemmas in this section.

By $k(A)$ we denote a vector $\left(k_1, \ldots, k_m\right)$, where each
coordinate $k_i$ is the number of letters $a_i$ in a word $A$. We call such a
vector $k(A)$ a profile of the word $A$.
%%%%%%%%%%%%
% YR: Khodak calls k(A) "sostav", which in direct translation means
% "content" or "composition". I'm using a term "profile" since it better
% (in my opinion) reflects the actual meaning of k(A).
%%%%%%%%%%%%
Let also $t(A) = \sum_{i=1}^{m-1} k_i$. By definition of word length $|A| =
\sum_{i=1}^m k_i$, and by definition of probability $p\,(A) = p_1^{k_1}
\ldots p_m^{k_m}$.

By $A'A''$ we denote a result of catenation of words $A'$ and $A''$. In
accordance with definitions:
\begin{eqnarray*}
k\left(A'A''\right) & = & k\left(A'\right) + k\left(A''\right),  \\
t\left(A'A''\right) & = & t\left(A'\right) + t\left(A''\right),  \\
\left|A'A''\right|  & = & \left|A'\right| + \left|A''\right|,    \\
p\left(A'A''\right) & = & p\left(A'\right) \, p\left(A''\right).  \\
\end{eqnarray*}

Assume that a set of all words contains also an empty word, $\lambda$. For
such a word: $k(\lambda) = (0,\ldots,0)$, $p(\lambda) = 1$, and for any words
$A$: $\lambda \, A = A \, \lambda = A$.

In what follows, all numbers, except for probabilities of symbols, and
constants in estimates of $\left(c_1, \ldots, c_m\right)$, are assumed to be
non-negative integers.

Each set $\mathfrak M$ of vectors $\left(k_1, \ldots, k_m\right)$ can be
associated with a set of words $M$. Suppose that $A \in M$ if and only if
$k(A) \in \mathfrak M$, and $A$ cannot be decomposed into $A'A''$, such that
$k\left(A'\right) \in \mathfrak M$, and $A'' \neq \lambda$. I.e. $M$ is a
prefix-free set.
%%%%%%%%%%%%
% YR: The condition: A \neq A'A'': A'' \neq \lambda, k(A') \neq \frak{M}
% does, indeed, imply that M is prefix-free. In is much stronger in a
% sense that A \nin M if there exists any string, which profile matches
% a profile of any prefix of A.
%%%%%%%%%%%%
\begin{lemma}
Given any set $\mathfrak M$ and a word $A$, if $k(A) \in \mathfrak M$, then
$A$ can be presented as $A'A''$, where $A' \in M$.
\end{lemma}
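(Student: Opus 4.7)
The plan is to take $A'$ to be the \emph{shortest} prefix of $A$ whose profile lies in $\mathfrak{M}$ and then verify that this prefix belongs to $M$. First, I would observe that the collection of prefixes $B$ of $A$ with $k(B) \in \mathfrak{M}$ is nonempty, since $A$ itself is such a prefix by hypothesis. Because the set of prefixes of $A$ is finite and linearly ordered by length, there is a unique shortest element $A'$ in this collection.

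Next, I would check that $A' \in M$. By construction $k(A') \in \mathfrak{M}$, so it remains to rule out a nontrivial decomposition $A' = B'B''$ with $k(B') \in \mathfrak{M}$ and $B'' \neq \lambda$. Any such $B'$ would be a strict prefix of $A'$, hence also a prefix of $A$, with profile in $\mathfrak{M}$ and length strictly less than $|A'|$, contradicting the minimality of $A'$. Therefore $A'$ satisfies both conditions in the definition of $M$, and writing $A''$ for the (possibly empty) suffix with $A = A'A''$ gives the required decomposition.

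I do not expect any real obstacle here: the entire content of the lemma is the observation that the shortest prefix of $A$ with profile in $\mathfrak{M}$ automatically satisfies the minimality clause in the definition of $M$. The only point worth stating explicitly is that the case $A'' = \lambda$ (i.e.\ $A$ itself already lies in $M$) is permitted, since the lemma's assertion does not require $A''$ to be nonempty.
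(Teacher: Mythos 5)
Your proof is correct, and it is exactly the natural reconstruction of the argument the paper has in mind (the paper explicitly omits proofs in this section, noting they are easily reconstructed): take the shortest prefix of $A$ whose profile lies in $\mathfrak M$ and observe that minimality forces it to satisfy the non-decomposability clause in the definition of $M$. Your closing remark that $A'' = \lambda$ is permitted is also the right reading of the statement.
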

%%%%%%%%%%%%%%
% YR: first, we exclude a trivial case A = A' \in M.
% Now, A \notin M, but k(A) \in \frak{M}. This means that there must
% exist (at least one) prefix A': A = A'A'', such that k(A') \in \frak{M}.
% If A' \in M we are done. If not, we continue the process and split A'
% recursively, until we find a prefix that can't be split into A' = A''A''',
% where k(A'') \in \frak{M}. Now, since k(A') \in \frak{M} => A' \in M.
%%%%%%%%%%%%%%
\begin{condition}
We say that a set $\mathfrak M$ of vectors $\left(k_1, \ldots, k_m\right)$
satisfies Condition 1 with parameter $T$, if for each $s \geqslant 1$ and
each vector $\left(k_1, \ldots, k_m\right)$, such that $\sum_{i=1}^{m-1}k_i =
sT^2$, there exists $k'_m$, such that $0 \leqslant k'_m < T$ and $\left(k_1,
\ldots, k_{m-1}, k'_m + k_m\right) \in \mathfrak M$.
\end{condition}
%%%%%%%%%%%%%%
% YR: i.e. if A is a word corresponding to any vector in \frak{M},
% and t(A) = s T^2 (s=1,2,...), then there must exist a word A' in \frak{M},
% such that t(A') = t(A), but |A'| = |A| + k'_m, where k'_m < T.
% In other words, Condition 1 means that any word with t(A) = s T^2 has up
% to T extensions (in arbitrary places) with letters a_m.
%%%%%%%%%%%%%%

By $F(D)$ we denote a set of words $A = a_{i_1} \ldots a_{i_r}$, such that
$a_i \neq a_m$, and $t(A) = D$. Let also $F(0) = \lambda$. It is clear that
$F(D)$ is a prefix-free set.
%%%%%%%%%%%%%%
% YR: what he means is that words A have exactly D symbols distinct from
% a_m. For shortest such words |A| = D, but, by mixing a_m's one can
% make them arbitrary long.
% This is not clear from this definition, but it follows from the next
% two lemmas.
%%%%%%%%%%%%%%
\begin{lemma}
For any $D_i \geqslant 1$, such that $\sum_i D_i = D$, any word $A \in F(D)$
has a unique decomposition into $A_1 A_2 \ldots A_i \ldots$, where $A_i \in
F(D_i)$~$(i=1,2,\ldots)$.
\end{lemma}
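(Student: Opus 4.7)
My plan is to exploit the structural characterization of $F(D)$: a word $A$ belongs to $F(D)$ iff it contains exactly $D$ letters different from $a_m$ and its last letter is itself different from $a_m$ (any trailing $a_m$'s are forbidden, otherwise $F(D)$ would not be prefix-free, as was observed for $F(D)$ just before the lemma). Consequently, specifying a decomposition $A = A_1 A_2 \ldots$ with $A_i \in F(D_i)$ amounts to choosing cut points in $A$, each of which must fall immediately after some non-$a_m$ letter.

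First I would list the positions of the non-$a_m$ letters of $A$ as $\pi_1 < \pi_2 < \ldots < \pi_D$ (there are exactly $D$ of them because $t(A) = D$, and $\pi_D = |A|$ because $A$ ends in a non-$a_m$ letter). Set $S_0 = 0$ and $S_i = D_1 + \ldots + D_i$. For existence, define $A_i$ to be the substring of $A$ occupying positions $\pi_{S_{i-1}}+1, \ldots, \pi_{S_i}$. By construction $A_i$ has exactly $D_i$ non-$a_m$ letters and terminates with one (the letter at position $\pi_{S_i}$), so $A_i \in F(D_i)$; and the concatenation $A_1 A_2 \ldots$ clearly reproduces $A$, since $\sum_i D_i = D$ implies the last block ends precisely at $\pi_D = |A|$.

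For uniqueness, suppose $A = B_1 B_2 \ldots$ with $B_i \in F(D_i)$. The prefix $B_1 \ldots B_i$ contains $D_1 + \ldots + D_i = S_i$ letters different from $a_m$ and, because $B_i$ ends in such a letter, its final position in $A$ must coincide with $\pi_{S_i}$. This forces every cut point to agree with the one used in the construction above, so $B_i = A_i$ for all $i$.

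The only substantive point to be careful about is the handling of the $a_m$ letters that appear between two consecutive non-$a_m$ letters of $A$: which block do they belong to? The answer is forced, and this is what makes both existence and uniqueness work. Since each $A_i$ must end on a non-$a_m$ letter, any $a_m$'s following position $\pi_{S_{i-1}}$ cannot be appended to $A_{i-1}$ and must be absorbed at the start of $A_i$. I expect this to be the only subtlety; once it is articulated, both the existence and uniqueness arguments reduce to bookkeeping about positions of non-$a_m$ letters.
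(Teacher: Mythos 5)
Your proof is correct. There is, however, no paper proof to compare it against: Section~4 explicitly omits the proofs of these lemmas (``Proofs of these estimates are omitted, but they can be easily reconstructed\ldots''), and your cut-point argument is precisely the natural reconstruction. Two things you did that deserve note. First, you correctly repaired the garbled definition of $F(D)$: the text literally says ``$a_i \neq a_m$,'' but the intended meaning must be that $A$ contains exactly $D$ letters different from $a_m$ \emph{and ends} in such a letter --- this is confirmed by the lemma immediately following (which explicitly allows $a_{i_1} = a_m$ for $A \in F(D)$) and by the factor $1/(1-p_m)$ in the average-length lemma, and it is exactly what makes $F(D)$ prefix-free. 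Second, your bookkeeping with the positions $\pi_1 < \cdots < \pi_D$ of the non-$a_m$ letters and the forced boundaries $\pi_{S_i}$, $S_i = D_1 + \cdots + D_i$, settles existence and uniqueness in one stroke, and it isolates the only genuine subtlety: a run of $a_m$'s after position $\pi_{S_i}$ cannot be appended to the block that just ended (blocks must end in a non-$a_m$ letter), so it is forced to begin the next block. The argument is complete as written; the only cosmetic gap is that $\pi_{S_0} = \pi_0$ needs the convention $\pi_0 = 0$, and one should remark that since every $D_i \geqslant 1$ and $\sum_i D_i = D < \infty$, the decomposition involves only finitely many blocks.
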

\begin{lemma}
Let $D \geqslant 1$ and $A = a_{i_1} \ldots a_{i_r} \in F(D)$. If $a_{i_1} =
a_m$, then $a_{i_2} \ldots a_{i_r} \in F(D)$ and vice verse. If $a_{i_1} \neq
a_m$, then $a_{i_2} \ldots a_{i_r} \in F(D-1)$ and vice verse.
\end{lemma}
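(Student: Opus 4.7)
The plan is to treat Lemma~3 as a direct bookkeeping statement, reducing it to two facts: (i)~$t$ is additive under concatenation, so that $t(a_i B) = t(B)$ when $a_i = a_m$ and $t(a_i B) = t(B) + 1$ when $a_i \neq a_m$; and (ii)~membership of $A$ in $F(D)$ amounts to $t(A) = D$ together with the condition that the \emph{final} letter of $A$ is not $a_m$ (this last-letter reading is forced by the prefix-free requirement: otherwise stripping trailing $a_m$'s from a word in $F(D)$ would produce a proper prefix still in $F(D)$). With these two observations in hand, each direction of each case of the lemma reduces to a one-line verification.

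I would then verify the two cases separately, each with its converse.

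\emph{Case $a_{i_1} = a_m$.} Removing the leading $a_m$ leaves $t$ unchanged, so $t(a_{i_2}\ldots a_{i_r}) = D \geq 1$; in particular the tail is nonempty, and its final letter $a_{i_r}$ coincides with the final letter of $A$, hence is not $a_m$. Thus $a_{i_2}\ldots a_{i_r} \in F(D)$. Conversely, prepending $a_m$ to a word in $F(D)$ preserves both $t$ and the final letter, yielding another word in $F(D)$.

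\emph{Case $a_{i_1} \neq a_m$.} Removing the leading non-$a_m$ letter drops $t$ by one, so $t(a_{i_2}\ldots a_{i_r}) = D - 1$. If $D = 1$ the tail is the empty word $\lambda$, which by convention belongs to $F(0)$; if $D \geq 2$ the tail is nonempty with final letter $a_{i_r} \neq a_m$, hence lies in $F(D-1)$. Conversely, prepending a letter $a_i \neq a_m$ to a word in $F(D-1)$ raises $t$ by one and leaves the final letter unchanged (or, when prepending to $\lambda$, produces the single-letter word $a_i$ with $t = 1$, which lies in $F(1)$).

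There is no real technical obstacle here; the lemma is a pure definition-chase. The only subtlety worth spelling out in the write-up is the interpretation of the phrase ``$a_i \neq a_m$'' in the definition of $F(D)$ as referring specifically to the final letter of the word, since this is what is needed both for prefix-freeness and for the case $D = 1$ of the second part above.
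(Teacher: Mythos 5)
Your proof is correct: the paper omits proofs in this section (stating they can be easily reconstructed from the definitions and the order of the lemmas), and your definition-chase --- additivity of $t$ under removing or prepending a first letter, together with reading the defining condition ``$a_i \neq a_m$'' of $F(D)$ as a condition on the final letter $a_{i_r}$, which is indeed the only reading consistent with prefix-freeness, with the $a_{i_1}=a_m$ case of this lemma, and with Lemmas 6 and 9 --- is exactly the intended argument. Your handling of the boundary case $D=1$, where the tail is $\lambda \in F(0)$, and of both converses is also correct.
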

\begin{lemma}
If $M$ is a prefix-free set, then for any word $A'$
\begin{equation*}
\sum_{A: A'A \in M} p\,(A) \leqslant 1.
\end{equation*}
\end{lemma}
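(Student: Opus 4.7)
The plan is to reduce to the familiar probabilistic Kraft inequality for prefix-free sets, namely that $\sum_{A \in S} p(A) \leqslant 1$ whenever $S$ is prefix-free. Once that is in hand, the claim follows from a short prefix-freeness check on the ``shifted'' set $S = \{A : A'A \in M\}$.

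First I would define $S = \{A : A'A \in M\}$ and verify that $S$ is itself prefix-free. Suppose $A_1, A_2 \in S$ with $A_1$ a prefix of $A_2$; writing $A_2 = A_1 A_3$ we have $A'A_2 = (A'A_1)A_3$, so the element $A'A_1 \in M$ is a prefix of the element $A'A_2 \in M$. Prefix-freeness of $M$ then forces $A_3 = \lambda$, i.e.\ $A_1 = A_2$. Hence $S$ is prefix-free, and the degenerate cases $A' = \lambda$ or $A_1 = \lambda$ are handled by exactly the same argument.

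Next I would establish the Kraft-type bound for an arbitrary prefix-free set of input words. The cleanest route is to interpret $p(\cdot)$ as the Bernoulli product measure $\mu$ on the space of infinite sequences over $\{a_1,\ldots,a_m\}$, and to associate to each $A \in S$ the cylinder $C_A$ of infinite sequences whose initial $|A|$ letters spell out $A$. By the Bernoulli hypothesis, $\mu(C_A) = p(A)$, and by prefix-freeness of $S$ the cylinders $\{C_A\}_{A \in S}$ are pairwise disjoint. Therefore
\[
\sum_{A \in S} p(A) \;=\; \mu\Bigl(\,\bigsqcup_{A \in S} C_A\Bigr) \;\leqslant\; 1,
\]
which specialized to $S = \{A : A'A \in M\}$ gives the claim.

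There is no serious obstacle here; the only items worth checking carefully are that $S$ is prefix-free in the edge cases (empty $A'$, or $\lambda \in S$), and that the Bernoulli product structure assumed throughout the paper is what licenses the identification $\mu(C_A) = p(A)$ used in the final estimate. If a measure-theoretic argument is undesirable, the same bound can be obtained by induction on $\max_{A \in S} |A|$, successively merging sibling leaves in the $m$-ary tree representation of $S$ and observing that each merge preserves $\sum_{A \in S} p(A)$.
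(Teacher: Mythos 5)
Your proof is correct. Note that there is nothing in the paper to compare it against: all proofs in Section~4, including that of this lemma, are explicitly omitted (``Proofs of these estimates are omitted, but they can be easily reconstructed\ldots''), so your job here was genuinely to reconstruct the argument, and you did so along the standard lines. The two steps --- checking that the shifted set $S = \{A : A'A \in M\}$ inherits prefix-freeness from $M$ (using injectivity of $A \mapsto A'A$ and cancellation of the common prefix $A'$), and then applying the Kraft-type bound $\sum_{A \in S} p(A) \leqslant 1$ via pairwise disjoint cylinders of the Bernoulli product measure --- are exactly what the lemma needs, and your treatment of the degenerate cases ($A' = \lambda$, or $\lambda \in S$, which forces $S = \{\lambda\}$) is sound. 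One small caveat on your closing remark: since $M$ may be infinite, $S$ may be infinite as well, so the measure-theoretic route (which needs only countable additivity, or equivalently the bound on every finite subsum) is the one that goes through directly; the alternative induction on $\max_{A \in S} |A|$ that you sketch presupposes bounded word lengths and would have to be applied to finite subsets of $S$ first, followed by passage to the supremum.
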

%%%%%%%%%%%%%%
% YR: Since M is prefix-free, then a set of suffixes of A' in this set:
% M' = {A: A'A \in M} is also prefix-free. If this set forms a complete
% tree, the sum: \sum_{A: A'A \in M} p(A) = 1. Otherwise, the sum < 1.
%%%%%%%%%%%%%%
\begin{lemma}
For all $D \geqslant 1$
\begin{equation*}
\sum_{A \in F(D)} p\,(A) = 1.
\end{equation*}
\end{lemma}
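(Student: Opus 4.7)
My plan is to induct on $D$, using Lemma 4 to decompose each word in $F(D)$ according to its first letter. Writing $S_D := \sum_{A \in F(D)} p(A)$, I would adopt the convention $S_0 = p(\lambda) = 1$ coming from the stipulated $F(0) = \{\lambda\}$, so that the inductive step subsumes the base case.

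The key input from Lemma 4, taking both the forward implication and its ``vice versa'' converse, is that for $D \geq 1$ the set $F(D)$ decomposes as the \emph{disjoint} union
$$F(D) = \{a_m B : B \in F(D)\} \;\sqcup\; \bigsqcup_{i=1}^{m-1} \{a_i B : B \in F(D-1)\}.$$
Before translating this identity into a relation between sums, I would invoke Lemma 3 with $A' = \lambda$ to get the a priori bound $S_D \leq 1$; this guarantees that the defining series for $S_D$ converges, so that rearrangement is legal. Using $p(a_i B) = p_i\, p(B)$ and splitting the sum along the decomposition above gives
$$S_D = p_m S_D + \sum_{i=1}^{m-1} p_i\, S_{D-1} = p_m S_D + (1 - p_m)\, S_{D-1}.$$
Since $p_m < 1$ (all $p_i$ are strictly positive), dividing by $1 - p_m$ yields $S_D = S_{D-1}$, and induction delivers $S_D = S_0 = 1$ for every $D \geq 1$.

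\textbf{Main obstacle.} The argument is essentially a one-step recursion, so I do not anticipate a substantive obstacle. The only delicate point is the preliminary appeal to Lemma 3 to guarantee $S_D < \infty$: without that bound, the manipulation $S_D = p_m S_D + (1-p_m)S_{D-1}$ would be purely formal, amounting to subtracting possibly infinite quantities. Once finiteness is secured, the remainder is arithmetic, with the only numerical hypothesis being $p_m < 1$, which holds because $p_i > 0$ for each $i$.
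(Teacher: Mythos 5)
Your proof is correct. Note that there is no proof in the paper to compare against: Section~4 states explicitly that the proofs of its lemmas are omitted and ``can be easily reconstructed by using the statements and the order of lemmas in this section,'' and your argument is exactly such a reconstruction. The disjoint decomposition of $F(D)$ by first letter is precisely what Lemma~4 (forward direction plus ``vice versa'') provides, the recursion $S_D = p_m S_D + (1-p_m)S_{D-1}$ follows, and you invoke finiteness exactly where it is genuinely needed --- to license subtracting $p_m S_D$; for nonnegative terms the rearrangement itself is valid in $[0,+\infty]$ regardless, so the only real role of the a priori bound is to rule out $S_D = \infty$, as you say. One citation correction: the bound $\sum_{A\colon A'A \in M} p(A) \leqslant 1$ for prefix-free $M$, which you apply with $A' = \lambda$, is Lemma~5 of the paper, not Lemma~3; the paper's Lemma~3 is the unique-decomposition statement. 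That Lemma~3, incidentally, gives an equally natural alternative reconstruction: each $A \in F(D)$ factors uniquely as $A_1 \cdots A_D$ with $A_i \in F(1)$, hence $S_D = S_1^D$, and $S_1 = \bigl(\sum_{k \geqslant 0} p_m^k\bigr)(p_1 + \cdots + p_{m-1}) = 1$ by the geometric series; your route avoids summing that series explicitly but needs the prefix-free bound instead, so the two are of comparable length, and both respect the ordering of the lemmas.
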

\begin{lemma}
There exists a constant $c_1 > 0$, such that for each $\mathfrak M$, that
satisfies Condition 1 with parameter $T$, any $s$, and any word $A' \in
F\left(s \, T^2 \right)$, holds
\begin{eqnarray*}
\sum_{A \in F(T^2)} p\,(A) & \geqslant & c_1 T^{-1} \\
k\left( A' A \right) & \in & \mathfrak M.
\end{eqnarray*}
\end{lemma}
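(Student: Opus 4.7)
The plan is to regard $p(A)$, restricted to $F(T^2)$, as a probability measure (which is exactly Lemma 5), decompose every $A \in F(T^2)$ into i.i.d.\ pieces via Lemma 3, and combine Condition 1 with a local-limit estimate for the negative-binomial distribution. The $c_1/T$ bound will come from the height of the negative-binomial pmf at its mode, while Condition 1 will guarantee that an admissible $k_m$ lies near that mode.

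First I would invoke Lemma 3 with $D_1 = \dots = D_{T^2} = 1$: every $A \in F(T^2)$ is uniquely $A = B_1 \cdots B_{T^2}$ with $B_i \in F(1)$, and since $\sum_{B \in F(1)} p(B) = 1$ by Lemma 5, the measure $A \mapsto p(A)$ on $F(T^2)$ is the product of $T^2$ copies of the measure on $F(1)$. Writing $B_i = a_m^{J_i} a_{L_i}$, the random variables $L_i$ are i.i.d.\ on $\{a_1, \dots, a_{m-1}\}$ with $\Pr(L_i = a_l) = p_l / (1 - p_m)$ and the $J_i$ are i.i.d.\ geometric with parameter $1 - p_m$, independent of the $L_i$. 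Consequently $(k_1(A), \dots, k_{m-1}(A))$ is multinomial, while $k_m(A) = \sum_i J_i$ is an independent (given the multinomial) negative-binomial $\mathrm{NB}(T^2, p_m)$ variable of mean $\mu = T^2 p_m/(1 - p_m)$ and standard deviation $\sigma = T\sqrt{p_m}/(1 - p_m) = \Theta(T)$.

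Next, for each fixed value of $(k_1(A), \dots, k_{m-1}(A))$ I would set
\begin{equation*}
\tilde G := \{k_m \geq 0 : k(A') + k(A) \in \mathfrak M\}
\end{equation*}
and use Condition 1 to show that $\tilde G$ meets every window $[j, j + T) \subset \mathbb Z_{\geq 0}$. Indeed, for each $j \geq 0$, apply Condition 1 to the vector $k(A') + k(A)$ with its last coordinate replaced by $k_m(A') + j$ (the first $m - 1$ coordinates sum to $(s + 1) T^2$); the witness $k'_m \in [0, T)$ gives $j + k'_m \in \tilde G$. Thus $\tilde G$ has lower density $1/T$.

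The final step is the pmf estimate: Stirling's formula applied to $q(j) := \binom{T^2 + j - 1}{j}(1 - p_m)^{T^2} p_m^j$ yields $q(j) = (1 + o(1))(\sigma\sqrt{2\pi})^{-1} \exp(-(j - \mu)^2 / (2\sigma^2))$, so for $j \in [\mu - T, \mu + T]$ one has $(j - \mu)^2/(2\sigma^2) \leq (1 - p_m)^2/(2 p_m) = O(1)$ and therefore $q(j) \geq c_0(p_m)/T$ for a positive constant $c_0(p_m)$. For $T$ exceeding a source-dependent constant this window lies in $\mathbb Z_{\geq 0}$ and, by the density property, meets $\tilde G$ at some $j^*$; hence $\Pr(k_m(A) \in \tilde G \mid k_1(A), \dots, k_{m-1}(A)) \geq q(j^*) \geq c_0/T$. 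Taking expectation over $(k_1(A), \dots, k_{m-1}(A))$ yields the lemma with $c_1 = c_0$, the finitely many small $T$'s being absorbed by a possibly smaller $c_1$. The only obstacle of substance is the negative-binomial local-limit estimate with uniform control over an $O(T)$-radius window around the mode; everything else is straightforward bookkeeping with Lemmas 3 and 5 and Condition 1.
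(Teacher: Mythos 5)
Your proof is correct, but a caveat on the comparison: the paper explicitly \emph{omits} the proofs of all lemmas in this section, saying only that they ``can be easily reconstructed by using the statements and the order of lemmas in this section,'' so there is no written proof to measure you against --- only that intended scaffolding. Your argument follows that scaffolding faithfully: Lemma 3 with $D_1=\cdots=D_{T^2}=1$ identifies $p$ restricted to $F(T^2)$ with the product of $T^2$ copies of the measure on $F(1)$ (the normalization $\sum_{A\in F(D)}p(A)=1$ is the paper's Lemma 6, not Lemma 5 --- a harmless citation slip), so that $k_m(A)$ is a sum of $T^2$ i.i.d.\ geometric variables independent of the counts $(k_1,\ldots,k_{m-1})$; Condition 1, applied as you do with the last coordinate shifted by $j$, shows the admissible set $\tilde G$ meets every window $[j,j+T)$; and the $\Theta(1/T)$ height of the negative-binomial pmf near its mean finishes the job. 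Two refinements are worth recording. First, the lattice local limit theorem is the only ingredient imported from outside the paper; if you want a self-contained substitute in the spirit of 1972, note that $q(j+1)/q(j)=p_m(T^2+j)/(j+1)$, so the pmf is unimodal with mode $j_0\approx p_mT^2/(1-p_m)$, Chebyshev plus pigeonhole over the $O(T)$ integers in $[\mu-2\sigma,\mu+2\sigma]$ give $q(j_0)\geqslant c/T$, and the ratio being $1-O(1/T)$ for $|j-j_0|\leqslant T$ propagates the bound $q(j)\geqslant c'/T$ across the whole window. Second, your closing phrase about absorbing ``the finitely many small $T$'s'' into $c_1$ needs one more line: for each fixed $T$ the bound must be uniform over all admissible $\mathfrak M$, $s$, and $A'$, which follows by fixing the multinomial part at a single configuration and using the witness $k'_m\in[0,T)$ that Condition 1 supplies --- each such point carries pmf bounded below by a constant depending only on $T$ and the source. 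Neither point is a gap in substance; the proof stands.
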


By $F_1(D,M)$ denote a set of words $A \in F(D)$, which cannot be decomposed
into $A'A''$, where $A' \in M$, and $A'' \neq \lambda$.
\begin{lemma}
For any $\mathfrak M$, satisfying Condition 1 with parameter $T$, and any $s
\geqslant 1$, the following holds
\begin{equation*}
\sum_{A \in F_1\left(s T^2, M\right)} p\,(A)\, \leqslant \left(1 - c_1
\,T^{-1} \right)^s~,
\end{equation*}
where $c_1$ is a constant, existence of which is guaranteed by Lemma 7.
\end{lemma}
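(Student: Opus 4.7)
The plan is to proceed by induction on $s$, leveraging the unique block decomposition from Lemma~3 together with the ``probability of hitting $\mathfrak{M}$'' bound from Lemma~7. The base case is trivial: $F_1(0,M) = \{\lambda\}$ carries mass $1$, matching $(1-c_1/T)^0$. For the inductive step, any word $A\in F(sT^2)$ decomposes uniquely as $A=BC$ with $B\in F((s-1)T^2)$ and $C\in F(T^2)$; moreover, if $A\in F_1(sT^2,M)$, meaning no prefix of $A$ lies in $M$, then $B$, being itself a prefix of $A$, inherits this property and hence lies in $F_1((s-1)T^2,M)$.

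Fix such a $B$. The induction step reduces to the one-step estimate
\begin{equation*}
\sum_{C\in F(T^2):\, BC\in F_1(sT^2,M)} p(C)\;\leq\; 1-c_1T^{-1},
\end{equation*}
which combined with $\sum_{C\in F(T^2)} p(C)=1$ (Lemma~5) and summation over $B$ weighted by $p(B)$ closes the induction. To prove it, apply Lemma~7 to the word $B\in F((s-1)T^2)$:
\begin{equation*}
\sum_{C\in F(T^2):\, k(BC)\in\mathfrak{M}} p(C)\;\geq\; c_1 T^{-1}.
\end{equation*}
For each such $C$, let $D$ be the shortest prefix of $BC$ whose profile lies in $\mathfrak{M}$; by the defining condition of $M$, $D\in M$. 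The inductive hypothesis forbids any prefix of $B$ from lying in $M$ (equivalently, from having profile in $\mathfrak{M}$, since the shortest such prefix would itself belong to $M$), so $D$ must strictly extend $B$ and is a prefix of $BC$ that belongs to $M$. Hence $BC\notin F_1(sT^2,M)$, and the excluded mass is at least $c_1/T$, as required.

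The main care point is the passage between the two notions of ``stopping'': Lemma~7 delivers a statement about vectors whose \emph{profile} lies in $\mathfrak{M}$, while $F_1$ is defined by the absence of actual prefixes in $M$. These notions differ in general, but agree on the \emph{shortest} witness, and the inductive hypothesis on $B$ is precisely what rules out the only degenerate possibility, namely that the $\mathfrak{M}$-witness occurs inside $B$ itself rather than strictly beyond it. A secondary point worth stating clearly is the reading of $F_1$: for the recursion to close, $F_1(D,M)$ should be understood as the set of words of $F(D)$ having no prefix in $M$ whatsoever, so that the coupling $A=BC$ genuinely forces $B\in F_1((s-1)T^2,M)$ and the same exclusion argument can be iterated.
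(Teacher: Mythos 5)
Your proof is correct and is essentially the argument the paper intends: the proofs in that section are deliberately omitted, and the induction on $s$ combining the unique decomposition of $F(sT^2)$ into $F((s-1)T^2)\cdot F(T^2)$ blocks (Lemma 3), the total-probability identity $\sum_{A\in F(T^2)}p(A)=1$ (which is the paper's Lemma 6, not Lemma 5 as you cite), and the hitting bound of Lemma 7 via the shortest-$\mathfrak M$-prefix argument of Lemma 2 is exactly the reconstruction suggested by the order of the lemmas. Your care point about reading $F_1(D,M)$ as ``no prefix in $M$ whatsoever'' is also well taken — under the literal ``no proper prefix'' reading the statement would in fact be false (take $\mathfrak M$ to be all profiles with $t$ a positive multiple of $T^2$, so that $M=F(T^2)$ and the left-hand side for $s=1$ equals $1$) — so resolving that ambiguity the way you do is necessary, not optional.
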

\begin{lemma}
The following holds:
\begin{equation*}
\sum_{A \in F(D)} p\,(A)\, |A| = \frac{D}{1-p_m}~.
\end{equation*}
\end{lemma}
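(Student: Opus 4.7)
The plan is to set up and solve a first-order linear recursion in $D$ for
\begin{equation*}
E(D) \;:=\; \sum_{A\in F(D)} p(A)\,|A|,
\end{equation*}
using the first-letter decomposition furnished by Lemma 3 together with the normalization of Lemma 5. The base case $D=0$ is immediate, since $F(0)=\{\lambda\}$ and $|\lambda|=0$ give $E(0)=0$, matching $D/(1-p_m)$.

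For $D\geq 1$, every $A\in F(D)$ is non-empty, so I split the sum by the first letter $a_{i_1}$, writing $A=a_{i_1}A'$ with $p(A)=p_{i_1}p(A')$ and $|A|=1+|A'|$. By Lemma 3, $A'\in F(D)$ when $a_{i_1}=a_m$ and $A'\in F(D-1)$ when $a_{i_1}\neq a_m$, and conversely every such $A'$ arises in this way. Grouping the $m-1$ contributions with $a_{i_1}\neq a_m$ (total probability $1-p_m$) gives
\begin{equation*}
E(D) \;=\; p_m\!\sum_{A'\in F(D)} p(A')\bigl(|A'|+1\bigr) \;+\; (1-p_m)\!\sum_{A'\in F(D-1)} p(A')\bigl(|A'|+1\bigr).
\end{equation*}
Applying Lemma 5 to each inner normalization sum (with the trivial check $\sum_{A'\in F(0)}p(A')=p(\lambda)=1$ when $D=1$) yields
\begin{equation*}
E(D) \;=\; p_m\bigl(E(D)+1\bigr) + (1-p_m)\bigl(E(D-1)+1\bigr),
\end{equation*}
which rearranges to $E(D)=E(D-1)+\dfrac{1}{1-p_m}$. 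Induction on $D$ then gives $E(D)=D/(1-p_m)$.

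No step looks subtle; the only care required is the boundary case $D=1$, where removing a non-$a_m$ first letter leaves the empty tail $\lambda\in F(0)$, and the recursion still closes cleanly because $p(\lambda)=1$ and $|\lambda|=0$. As a sanity check, a purely probabilistic interpretation is available: running the Bernoulli source and stopping at the $D$-th occurrence of a symbol other than $a_m$ produces exactly a random word distributed on $F(D)$ according to $p(\cdot)$, whose length is a negative-binomial variable with mean $D/(1-p_m)$, recovering the same answer without any algebra.
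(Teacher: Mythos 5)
Your argument is sound, but note that there is no proof in the paper to compare it against: Section~4 states explicitly that proofs of its lemmas ``are omitted, but they can be easily reconstructed by using the statements and the order of lemmas in this section.'' Your recursion is exactly such a reconstruction, built from the two facts the paper lines up immediately before this statement; just fix your references, which are off by one --- the first-letter decomposition is Lemma~4 (Lemma~3 is the block decomposition of $F(D)$ into words from the $F(D_i)$), and the normalization $\sum_{A\in F(D)}p\,(A)=1$ is Lemma~6 (Lemma~5 is the inequality for general prefix-free sets). The one step you should not wave through is the rearrangement of
\begin{equation*}
E(D)=p_m\left(E(D)+1\right)+(1-p_m)\left(E(D-1)+1\right)
\end{equation*}
into $E(D)=E(D-1)+\frac{1}{1-p_m}$: it subtracts $p_m\,E(D)$ from both sides, which is legitimate only if you already know $E(D)<\infty$; a priori the displayed identity could read $\infty=\infty$. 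Finiteness is easy to supply --- the words of $F(D)$ of length $r$ have total probability $\binom{r-1}{D-1}(1-p_m)^{D}p_m^{\,r-D}$, which decays geometrically in $r$ since $p_m<1$ --- but it must be said. Alternatively, your closing ``sanity check'' deserves to be promoted to the actual proof, since it is both complete and cleanest: by Lemma~3 with all $D_i=1$, each $A\in F(D)$ factors uniquely into $D$ words of $F(1)$, so Tonelli's theorem (all terms nonnegative, hence no finiteness hypothesis needed) together with Lemma~6 gives $E(D)=D\,E(1)$, and the geometric series $E(1)=\sum_{k\geqslant 0}(1-p_m)\,p_m^{k}(k+1)=\frac{1}{1-p_m}$ finishes the computation; that is your negative-binomial observation made exact.
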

%%%%%%%%%%%%%%%%
% YR: geometric series: \sum_{k=1..\infty} p_m^k = 1/(1-p_m).
%%%%%%%%%%%%%%%%
The main result in this section is given by the following lemma.
\begin{lemma}
For any $\mathfrak M$, satisfying Condition 1 with parameter $T$, the
following holds:
\begin{equation*}
\sum_{A \in M} p\,(A)\, |A| \lesssim T^3~,~~(T \rightarrow \infty)~.
\end{equation*}
\end{lemma}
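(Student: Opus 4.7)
I interpret $\sum_{A \in M} p(A)|A|$ as the average length of the shortest prefix in $M$ of a random message from the Bernoulli source, and then bound it by a block-decomposition argument. Lemma 8 applied with $s \to \infty$ gives $\sum_{A \in F_1(sT^2, M)} p(A) \leq (1-c_1/T)^s \to 0$, so $\sum_{A \in M} p(A) = 1$, i.e.\ $M$ is a complete prefix code for the source. If $X$ denotes the shortest prefix in $M$ of the random infinite message, then $\sum_{A \in M} p(A)|A|$ equals the average value of $|X|$.

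By Lemma 3, each word in $F(sT^2)$ factors uniquely as $A_1 A_2 \cdots A_s$ with $A_i \in F(T^2)$, so the infinite message decomposes into a sequence of blocks $A_1, A_2, \ldots$. By the memoryless property of the Bernoulli source these blocks are independent, each distributed on $F(T^2)$ with total mass $1$ (Lemma 6) and average length $T^2/(1-p_m)$ (Lemma 9). Let $s^*$ be the smallest $s$ with $k(A_1 \cdots A_s) \in \mathfrak M$; by Lemma 2 the word $X$ is a prefix of $A_1 \cdots A_{s^*}$, hence $|X| \leq \sum_{i=1}^{s^*} |A_i|$. The event $\{s^* \geq s\}$ coincides with $\{A_1 \cdots A_{s-1} \in F_1((s-1)T^2, M)\}$, so Lemma 8 yields $\Pr[s^* \geq s] \leq (1 - c_1/T)^{s-1}$, and summing shows that the average value of $s^*$ is at most $T/c_1$.

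Since $s^*$ is a stopping time for the i.i.d.\ block sequence, Wald's identity (or equivalently the direct rearrangement $\sum_s \Pr[s^* \geq s]\cdot T^2/(1-p_m)$) gives
\[
\sum_{A \in M} p(A)|A| \;\leq\; \frac{T}{c_1}\cdot \frac{T^2}{1-p_m} \;\lesssim\; T^3,
\]
as required. The principal obstacle is bookkeeping rather than analysis: checking that the block decomposition really delivers i.i.d.\ $F(T^2)$-distributed factors (Lemma 3 plus memorylessness), that $X$ is truly a prefix of $A_1 \cdots A_{s^*}$ (Lemma 2), and that the geometric bound on $\Pr[s^* \geq s]$ is exactly what Lemma 8 provides. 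Once these combinatorial ingredients are in hand the estimate collapses to a one-line geometric-series calculation.
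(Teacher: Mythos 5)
Your overall architecture --- parse the message into i.i.d.\ $F(T^2)$-blocks (Lemmas 3, 6, 9), show the $M$-parsing terminates after a number of blocks with geometrically decaying tail, and multiply the expected number of blocks $\lesssim T$ by the expected block length $T^2/(1-p_m)$ --- is exactly the reconstruction the paper intends (the paper omits this proof and says it follows from the lemmas of Section 4 in order). But the step carrying all of the quantitative content is wrong as written: the events $\{s^* \geq s\}$ and $\{A_1\cdots A_{s-1} \in F_1((s-1)T^2,M)\}$ do \emph{not} coincide. Membership in $F_1((s-1)T^2,M)$ forbids a prefix in $M$ ending at \emph{any} position inside $A_1\cdots A_{s-1}$, whereas your $s^*$ only tests the profiles $k(A_1\cdots A_{s'})$ at the block boundaries $s'\leq s-1$. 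Condition 1 is purely an existence requirement; it does not force the vectors of $\mathfrak M$ to have $\sum_{i=1}^{m-1}k_i$ divisible by $T^2$. For instance, $\mathfrak M$ may contain $(1,0,\ldots,0)$, so that $a_1\in M$ and every message beginning with $a_1$ acquires an $M$-prefix in the middle of its first block, while the block-boundary profiles can avoid $\mathfrak M$ for a long time. Hence $\{s^*\geq s\}$ strictly contains the $F_1$-event, the inclusion runs in the wrong direction, and Lemma 8 does not bound $\Pr[s^*\geq s]$; your geometric tail bound is unproved as derived.

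The gap is repairable without changing your architecture, in either of two ways. (i) Redefine $s^*$ as the smallest $s$ such that $A_1\cdots A_s$ has a prefix (proper or not) in $M$. Then $X$ is still a prefix of $A_1\cdots A_{s^*}$ (the $M$-prefix of $A_1\cdots A_{s^*}$ is a prefix of the message, and by prefix-freeness of $M$ it must equal $X$), and now $\{s^*>s\}\subseteq\{A_1\cdots A_s\in F_1(sT^2,M)\}$ holds trivially, so Lemma 8 gives $\Pr[s^*>s]\leq(1-c_1T^{-1})^s$. (ii) Keep your definition of $s^*$ but replace the appeal to Lemma 8 by Lemma 7: conditionally on any values of $A_1,\ldots,A_{s-1}$, the next block lands $k(A_1\cdots A_s)$ in $\mathfrak M$ with probability at least $c_1T^{-1}$ (Lemma 7 applied to $A'=A_1\cdots A_{s-1}$, plus independence of blocks), so induction gives $\Pr[s^*>s]\leq(1-c_1T^{-1})^s$ --- this conditioning argument is in effect how Lemma 8 itself is obtained. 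With either repair, the remainder of your proof (completeness of $M$ via Lemma 8, i.i.d.\ blocks, Wald or the direct rearrangement $\sum_s\Pr[s^*\geq s]\cdot T^2/(1-p_m)$) is sound and yields $\sum_{A\in M}p(A)|A|\lesssim T\cdot T^2=T^3$.
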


Now, given a fixed number $T$, we would like to find out how to select the
minimum length $T_2$ of words, such that their combined probability is
sufficiently small. Such a result will be needed for estimating the maximum
delay of the code.
\begin{lemma}
There exists $T_2 = T_2(T)$, such that
\begin{equation*}
T_2 \lesssim T^3 \ln T~,~~(T \rightarrow \infty)~,
\end{equation*}
and for any $\mathfrak M$, satisfying Condition 1 with parameter $T$, the
following holds
\begin{equation*}
\sum_{A \in M,~|A| \geqslant T^2} p\,(A) \lesssim T^{-2}~,~~(T \rightarrow
\infty)~.
\end{equation*}
\end{lemma}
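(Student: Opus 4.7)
I would choose $s_0 = \lceil (2/c_1)\, T \ln T \rceil$, with $c_1$ the constant from Lemma~7, and let $T_2$ be a sufficiently large constant multiple of $s_0 T^2$, so that $T_2 = O(T^3 \ln T)$. The idea is to bound $\sum_{A \in M,\, |A| \geqslant T_2} p(A)$ by splitting on the size of $t(A)$ relative to $s_0 T^2$.

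For the \emph{large-$t$} case $t(A) \geqslant s_0 T^2$: any such $A \in M$ has a unique prefix $B$ ending at its $(s_0 T^2)$-th non-$a_m$ letter, so $B \in F(s_0 T^2)$, and since $M$ is prefix-free, no proper prefix of $B$ lies in $M$, i.e.\ $B \in F_1(s_0 T^2, M)$. Grouping the $A$'s by their prefix $B$, applying Lemma~4 to each inner sum $\sum_{A'':\, BA'' \in M} p(A'') \leqslant 1$, and then Lemma~8, yields
\[
\sum_{\substack{A \in M \\ t(A) \geqslant s_0 T^2}} p(A) \;\leqslant\; \sum_{B \in F_1(s_0 T^2, M)} p(B) \;\leqslant\; (1 - c_1/T)^{s_0},
\]
which by our choice of $s_0$ is $\lesssim T^{-2}$.

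For the \emph{small-$t$} case $|A| \geqslant T_2$ and $t(A) < s_0 T^2$: writing $A = B A''$ with $|B| = T_2$, the prefix $B$ satisfies $t(B) \leqslant t(A) < s_0 T^2$. Using $p(A) = p(B)\, p(A'')$ and Lemma~4 once more,
\[
\sum_{\substack{A \in M,\, |A| \geqslant T_2 \\ t(A) < s_0 T^2}} p(A) \;\leqslant\; \sum_{\substack{|B| = T_2 \\ t(B) < s_0 T^2}} p(B) \;=\; \Pr\!\left[\operatorname{Bin}(T_2,\, 1-p_m) < s_0 T^2\right],
\]
the last equality because that sum is exactly the Bernoulli probability that $T_2$ independent source draws yield fewer than $s_0 T^2$ non-$a_m$ letters. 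With the constant in $T_2$ chosen so that $s_0 T^2 \leqslant (1-p_m)\, T_2 / 2$, a Chernoff bound makes the right-hand side $\leqslant e^{-c\, T_2}$, which is far smaller than $T^{-2}$. Summing the two cases completes the proof.

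\emph{Main obstacle.} The delicate step is keeping $T_2 = O(T^3 \ln T)$ in the small-$t$ case. Applying Markov's inequality to Lemma~10 would yield only $T_2 \gtrsim T^5$; sharpening this requires replacing the first-moment bound on $|A|$ by a large-deviation estimate for the count of non-$a_m$ letters in a fixed-length prefix, which is precisely what the Chernoff step above supplies.
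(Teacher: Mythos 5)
The paper explicitly omits the proofs of the Section~4 lemmas (they ``can be easily reconstructed by using the statements and the order of lemmas in this section''), and your proposal is a correct reconstruction that follows exactly the route this ordering dictates: Lemma~8 with $s_0 \approx (2/c_1)\,T\ln T$ kills the words containing at least $s_0T^2$ letters other than $a_m$, the prefix-grouping bound (which is Lemma~5 in the paper's numbering, not Lemma~4) transfers this to sums over $M$, and a standard binomial large-deviation estimate disposes of words that are long only because they contain many $a_m$'s --- precisely the step that yields $T_2 \lesssim T^3\ln T$ rather than the $T^5$ that Markov applied to Lemma~10 would give, as you correctly note. A further point in your favor: you implicitly (and correctly) read the misprinted condition ``$|A|\geqslant T^2$'' in the statement as ``$|A|\geqslant T_2$''; as literally printed the lemma is false, since for $\mathfrak M=\left\{\left(k_1,\ldots,k_m\right):\ \sum_{i=1}^{m-1}k_i\geqslant T^2,\ \sum_{i=1}^{m-1}k_i\equiv 0 \pmod{T^2}\right\}$ Condition~1 holds with $k'_m=0$ and the associated set is $M=F(T^2)$, which puts total probability $1$ on words of length at least $T^2$.
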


\section{Construction of the code}

As we pointed out in Section~1, in order to construct a (word-based) code one
needs to specify a set of words $A_j$~$(j=1,2,\ldots)$, such that any
incoming message can be uniquely represented by them. In addition, words
$A_j$ need to be mapped to output words $\phi \left( A_j \right)$, such that
the resulting code is decipherable. Hereafter, we assume that all words are
not empty.

Let $M'$ and $M''$ be some sets of words. By $M' \wedge M''$ we denote a
prefix-free extension of $M'$ by words from $M''$. In other words, $M' \wedge
M''$ is a set of words from $M' \cup M''$, which cannot be presented as
$A'A''$, where $A' \in M' \cup M''$, and $A''$ is not empty. It is clear,
that $M' \wedge M''$ is prefix-free and that the operation $\wedge$ is
commutative and associative. If $M$ is prefix-free, then $M \wedge M = M$.
\begin{lemma}
If any message begins with a word from $M'$, then it can also be uniquely
represented by words from $M' \wedge M''$, with any extension set $M''$.
\end{lemma}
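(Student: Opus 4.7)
The plan is to parse an arbitrary infinite message $\mu$ greedily by repeatedly extracting the shortest prefix of the remaining suffix that lies in $M' \cup M''$, and to show that the extracted pieces lie in $M' \wedge M''$, exhaust $\mu$, and form the unique such parsing.

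First, I would record a preliminary observation: $M' \wedge M''$ is itself prefix-free. Indeed, if $X, Y \in M' \wedge M''$ with $X$ a proper prefix of $Y$, then $Y = X Z$ with $Z \neq \lambda$ and $X \in M' \cup M''$, which contradicts $Y$ being in $M' \wedge M''$ by the very definition of the operation $\wedge$.

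Next comes existence. Given an infinite message $\mu$, the hypothesis supplies a prefix of $\mu$ lying in $M' \subseteq M' \cup M''$, so the family of prefixes of $\mu$ that belong to $M' \cup M''$ is non-empty. Because prefixes of $\mu$ are totally ordered by length, this family has a unique shortest element $B_1$. By the minimality of $B_1$, no decomposition $B_1 = A' A''$ with $A' \in M' \cup M''$ and $A'' \neq \lambda$ is possible, so $B_1 \in M' \wedge M''$. Writing $\mu = B_1 \mu_1$, the suffix $\mu_1$ is again an infinite message and therefore, by the hypothesis applied to $\mu_1$, again begins with a word from $M'$. Iterating produces $B_1, B_2, \ldots \in M' \wedge M''$ with $\mu = B_1 B_2 B_3 \cdots$.

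For uniqueness, suppose $\mu = B_1 B_2 \cdots = C_1 C_2 \cdots$ with $B_i, C_i \in M' \wedge M''$. Both $B_1$ and $C_1$ are prefixes of $\mu$, so one is a prefix of the other; the prefix-freeness of $M' \wedge M''$ established above forces $B_1 = C_1$. Cancelling and inducting on the position yields $B_i = C_i$ for every $i$. The only slightly delicate point in the whole argument is the inductive step in the existence proof: one needs the suffix $\mu_1$ to again satisfy the hypothesis. This is handled simply by noting that the hypothesis is stated for \emph{every} infinite message, so it transfers to $\mu_1$ with no additional work.
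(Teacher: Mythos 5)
Your proof is correct, and it takes the only natural route: the paper itself offers no argument beyond the remark that the lemma ``follows from the definition of the operation $\wedge$,'' and your greedy shortest-prefix parsing, together with the observation that $M' \wedge M''$ is prefix-free, is precisely the standard elaboration of that remark. The one point that genuinely needed care --- that the suffix $\mu_1$ is again a message to which the hypothesis applies --- you identify and handle explicitly, so there is no gap.
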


The proof follows from the definition of the operation $\wedge$.

\begin{theorem}
For any Bernoulli source and infinitely many $T$ there exist decipherable
codes such that
\begin{equation*}
R \lesssim \bar{N}^{-1} \, T^{-2}, ~~ \bar{N} \lesssim T^3, ~~ N \lesssim T^3
\ln T ~~ (T \rightarrow \infty)~.
\end{equation*}
\end{theorem}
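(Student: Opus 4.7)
My plan is to instantiate the machinery of Sections 3 and 4 with the linear form $f(k_1,\ldots,k_m) = \sum_{i=1}^m k_i \log_n p_i$, which equals $\log_n p(A)$ for a word $A$ with profile $(k_1,\ldots,k_m)$, and then assign integer codeword lengths that track $-\log_n p(A)$ as tightly as possible. I first dispose of the degenerate case in which every $\log_n p_i$ is rational, say with common denominator $q$: then $\log_n p(A) \in q^{-1}\mathbb{Z}$ for every $A$, and a block-style construction with exact integer lengths achieves $R=0$ trivially for infinitely many $T$. Hereafter I assume $\log_n p_m$ is irrational so that Lemma~1 applies.

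Applying Lemma~1 yields infinitely many integers $T$ for which every profile admits a shift in its $m$-th coordinate by some $k'_m$ or $k''_m$ in $[0,T)$ landing in $\mathfrak{M} := \{(k_1,\ldots,k_m) : \|\log_n p(A)\| \leqslant 2/T\}$. Thus $\mathfrak M$ satisfies Condition~1 with parameter $T$, so the results of Section~4 apply to the associated prefix-free set $M_0$. By Lemma~10, $\sum_{A\in M_0} p(A)|A| \lesssim T^3$, and by Lemma~11 there exists $T_2 \lesssim T^3 \ln T$ with $\sum_{A\in M_0,\,|A|\geqslant T_2} p(A) \lesssim T^{-2}$. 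I then truncate and extend: set $M_0^* = \{A\in M_0 : |A|<T_2\}$, let $M_1$ be the set of all input-words of length exactly $T_2$, and let $M = M_0^* \wedge M_1$. By Lemma~12, $M$ encodes every message uniquely; clearly $N \leqslant T_2 \lesssim T^3\ln T$, and $\bar N \leqslant \sum_{A\in M_0} p(A)|A| + T_2\cdot T^{-2} \lesssim T^3$.

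For codeword lengths, for each $A \in M_0^*$ the value $-\log_n p(A)$ lies within $2/T$ of a unique integer, which I assign as $|\phi(A)|$; then $\varepsilon_j := |\phi(A_j)| + \log_n p(A_j) \in [-2/T,\,2/T]$. Ceiling is used when $\log_n p(A)$ sits just above an integer (the $k'_m$ variant of Lemma~1), and floor when it sits just below (the $k''_m$ variant). For $A \in M_1 \cap M$ set $|\phi(A)| = \lceil -\log_n p(A)\rceil$, so $\varepsilon_j \in [0,1)$. If the Kraft sum then exceeds $1$ (due to floor-type assignments), I increment $|\phi(A)|$ by one on a carefully chosen low-probability set to force $\delta \lesssim T^{-2}$ while keeping $|\varepsilon_j|\leqslant 1$. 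Since $|\varepsilon_j|\leqslant 1$, Theorem~1's upper bound gives
\begin{equation*}
R \,\leqslant\, \bar N^{-1}\Bigl(\tfrac{\delta}{\ln n} + \tfrac{n\ln n}{2}\sum_j p(A_j)\,\varepsilon_j^2\Bigr).
\end{equation*}
The second term splits into an $M_0^*$-contribution $\leqslant (2/T)^2$ and an $M_1\cap M$-contribution bounded by the truncation probability $\lesssim T^{-2}$. Combined with $\delta \lesssim T^{-2}$, this yields $R \lesssim \bar N^{-1} T^{-2}$.

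The main obstacle is the Kraft-tightening step. A naive ceiling-only assignment keeps all $\varepsilon_j \geqslant 0$ and forces $\delta$ of order $1/T$, giving only $R \lesssim \bar N^{-1}T^{-1}$. Gaining the extra factor of $T$ requires the two-sided precision of Lemma~1: one must pair floor-assigned and ceiling-assigned words so that the signed errors in $\sum_j p(A_j)\,\varepsilon_j$ partially cancel, driving the Kraft sum within $O(T^{-2})$ of exactly $1$. Carrying this out while preserving $|\varepsilon_j|\lesssim 1/T$ on essentially all the probability mass, uniformly over the infinitely many convergent denominators $T$ supplied by Lemma~1, is the delicate combinatorial core of the argument.
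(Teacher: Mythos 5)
Your framework is the paper's own: Lemma~1 supplies the denominators $T$, Condition~1 feeds Lemmas~10 and~11 to control $\bar N$ and $N$, codeword lengths track the nearest integer to $-\log_n p(A)$, and Theorem~1's upper bound converts $\delta$ and $\sum_j p(A_j)\varepsilon_j^2$ into the redundancy bound; your delay bookkeeping is correct. But the one step you leave undone --- ``increment $|\phi(A)|$ by one on a carefully chosen low-probability set to force $\delta \lesssim T^{-2}$'' --- is not a deferrable detail; it is the heart of the theorem, and the mechanism you sketch cannot deliver it. With nearest-integer lengths, each floor-rounded word contributes $n^{-l(A)} = p(A)\,n^{\{-\log_n p(A)\}} \leqslant p(A)\,n^{2/T}$, so in the worst case (nothing prevents essentially all admissible profiles from landing just above integers) the Kraft sum exceeds $1$ by $\Theta(1/T)$. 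A unit increment of one length removes only $(1-n^{-1})\,n^{-l(A)} \approx (1-n^{-1})\,p(A)$ from the Kraft sum, so curing a $\Theta(1/T)$ excess forces the incremented words to carry total probability $\Theta(1/T)$; those words then have $\varepsilon_j \approx 1$ and alone contribute $\Theta(1/T)$ to $\sum_j p(A_j)\varepsilon_j^2$, so Theorem~1 yields only $R \lesssim \bar N^{-1}T^{-1}$ --- exactly the barrier you concede. The ``pairing/cancellation'' you allude to is also not what the paper does. Its device is to keep the two rounding classes as two separate complete sets: $M_1$ from profiles with $\{-\sum_i k_i \log_n p_i\} \leqslant 2/T$ carrying floor lengths, and $M_2$ from profiles with $1-\{-\sum_i k_i \log_n p_i\} \leqslant 2/T$ carrying floor-plus-one lengths (both completed by all words of length $T_2$). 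Then $\sum_{A \in M_2} n^{-l(A)} \leqslant 1$ automatically; if $\sum_{A \in M_1} n^{-l(A)} > 1$, one enumerates $M_2 = \{A^s\}$ and walks through the prefix-free merges $M_1 \wedge \bigl(\bigcup_{s \leqslant k} A^s\bigr)$, whose Kraft sums $g(k)$ move by at most $n\,p(A^k) \leqslant n\,(\max_i p_i)^T$ per step (eqs.~(33)--(34)); stopping at the crossing index $k_0$ of (32) leaves $0 \leqslant \delta \leqslant n\,(\max_i p_i)^T$, exponentially small, while every codeword still satisfies $|\varepsilon_j| \leqslant 2/T$ except the length-$T_2$ words of mass $\lesssim T^{-2}$. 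That word-by-word interpolation with exponentially small step size is the idea missing from your proposal.

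A second, smaller error: your dismissal of the all-rational case is false. Take $n=4$ and $(p_1,p_2,p_3)=(1/2,1/4,1/4)$, so $-\log_4 p_i \in \{1/2,\,1\}$, all rational. The message $a_1 a_2 a_2 a_2 \cdots$ has $-\log_4 p$ equal to a half-odd integer for \emph{every} prefix, so every complete prefix-free set contains a word $A_{j_0}$ with $\|\log_4 p(A_{j_0})\| = 1/2$, and the Corollary to Theorem~1 then gives $R > 0$ for every code of this source: no ``block-style construction with exact integer lengths'' achieving $R=0$ exists. The paper claims only that in the rational case the redundancy can be made arbitrarily small with bounded average delay (decaying exponentially in the maximal delay), and proving even that requires rerunning the two-set construction on the finitely many attainable fractional parts, not an appeal to triviality.
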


\begin{proof}
a) First, consider a case when not all $\log_n p_i$~ $(i=1,\ldots,m)$ are
rational. With no loss of generality, we can assume that the last such a
number $\log_n p_m$ is irrational.

Let $T$ be one of the numbers satisfying conditions of Lemma~1 for a linear
form $-\sum_{i=1}^m k_i \log_n p_i$, and $T_2 = T_2(T)$ a number, satisfying
conditions of Lemma~11.

Consider a set $\widetilde{\mathfrak M}_1 \left(\widetilde{\mathfrak M}_2
\right)$ of vectors $\left(k_1,\ldots,k_m\right)$, such that
\begin{equation*}
\left\{-\sum_{i=1}^m k_i \log_n p_i\right\} \leqslant \frac{2}{T} \left({1 -
\left\{-\sum_{i=1}^m k_i \log_n p_i\right\} }\right) \leqslant \frac{2}{T}~.
\end{equation*}
According to Lemma~1, the sets $\widetilde{\mathfrak M}_1$ and
$\widetilde{\mathfrak M}_2$ are not empty, and satisfy the Condition 1 with
parameter $T$. Let:
\begin{equation*}
{\mathfrak M}_i = \widetilde{\mathfrak M}_i \cup \left\{ \left(k_1, \ldots,
k_m \right) \left|~ {\sum_{i=1}^m k_i = T_2} \right. \right\}~.
\end{equation*}
The sets ${\mathfrak M}_1$ and ${\mathfrak M}_2$ also satisfy the Condition~1
with parameter $T$. Let $M_1$ and $M_2$ be the sets of words that are
associated with the sets of vectors ${\mathfrak M}_1$ and ${\mathfrak M}_2$
correspondingly (see Section~4 for details). Let $\left\{a_{i_k}
\right\}_{k=1}^\infty$ be some message. Then $k\left(a_{i_1} \ldots
a_{i_{T_2}}\right) \in {\mathfrak M}_i$, and according to Lemma 2, such a
message begins with some word in $M_i$~$(i=1,2)$. Therefore, for any $A \in
M_1 \cup M_2$
\begin{equation}
\left| A \right| \leqslant T_2~. \label{eq:24}
\end{equation}

According to Lemma~5
\begin{equation}
\sum_{A \in M_i} p \left( A \right) \left| A \right| \lesssim T^3~~ (i=1,2)~.
\label{eq:25}
\end{equation}
From Lemma~11 and (\ref{eq:24})
\begin{equation}
\sum_{A \in M_i,~|A| = T^2} p \left( A \right) \lesssim T^{-2}~,~~(i=1,2)~.
\label{eq:26}
\end{equation}
Let us now define
\begin{equation}
l \left(A \right) = \left \{ {
\begin{array}{lcl}
\left[{-\log_n p \left( A \right)}\right]~, & \mbox{if} & A \in M_1,
~A \notin M_2 \,, \\
\left[{-\log_n p \left( A \right)}\right] + 1~, & \mbox{if} & A \in M_2\,.
\end{array}
} \right. \label{eq:27}
\end{equation}
If
\begin{equation}
\sum_{A \in M_1} n^{-\, l \left( A \right)} \leqslant 1~, \label{eq:28}
\end{equation}
then words $A_j$ can be taken from $M_1$. Lemma 12 ensures that $M_1 = M_1
\wedge M_1$ has the required properties.

Let
\begin{equation}
\sum_{A \in M_1} n^{- \, l \left( A \right)} > 1~. \label{eq:29}
\end{equation}
From (\ref{eq:27}) it follows, that
\begin{equation}
\sum_{A \in M_2} n^{- \, l \left( A \right)} \leqslant 1~. \label{eq:30}
\end{equation}
We will assume that
\begin{equation}
\sum_{A \in M_1 \wedge M_2} n^{- \, l \left( A \right)} \leqslant 1~.
\label{eq:31}
\end{equation}
In the contrary is true, we can simply exchange positions of $M_1$ and $M_2$
in the following construction procedure. Let us enumerate words in $M_2$,
$M_2 = \left\{ A^s, s = 1,2,\ldots \right\}$.  Consider
\begin{equation*}
g(k) = \sum_{A \in M_1 \wedge \left( \bigcup_{s=1}^k A^s \right)} n^{- \, l
\left( A \right)}~.
\end{equation*}
Due to (\ref{eq:29}) and (\ref{eq:31}) there exists $k_0$, such that
\begin{equation}
g\left( k_0 - 1 \right) > 1 \geqslant g \left( k_0 \right)~.
\label{eq:32}
\end{equation}
Based on (\ref{eq:27}) for any $k$
\begin{equation}
g\left( k - 1 \right) - g \left( k \right) \leqslant n \, p \left( A^k
\right)~. \label{eq:33}
\end{equation}
Since $\left| A^k \right| \geqslant T$ for any $k$, then
\begin{equation}
p \left( A^k \right) \leqslant \left( \max_{1 \leqslant 1 \leqslant m} p_i
\right)^T ~. \label{eq:34}
\end{equation}

We will take words $A_j$ from $M_1 \wedge \left( \bigcup_{s=1}^{k_0} A^s
\right)$. The uniqueness of the representation is guaranteed by Lemma 12.

If (\ref{eq:28}) holds, then from (\ref{eq:27}) it follows, that for $T > 4$
\begin{equation}
0 \leqslant 1 - \sum_j n^{- \, l \left( A_j \right)} \leqslant \sum_{j:\, A_j
\in M_2} p \left( A_j \right) = \sum_{j:\, \left| A_j \right| = T_2} p \left(
A_j \right)~. \label{eq:35}
\end{equation}
Using (\ref{eq:26}) and (\ref{eq:35}) we obtain
\begin{equation}
0 \leqslant 1 - \sum_j n^{- \, l \left( A_j \right)} \lesssim T^{-2}~.
\label{eq:36}
\end{equation}
If (\ref{eq:29}) holds, then using (\ref{eq:32}) we also arrive at
(\ref{eq:36}).

Observe that (\ref{eq:36}) is a Kraft inequality for a coding system with
code lengths $\left\{ l \left( A_j \right) \right\}$. This means, that there
exists a decipherable prefix code with $\left| \phi \left( A_j \right)
\right| = l \left( A_j \right)$~ $(i=1,2,\ldots)$ (see \cite{Huffman}). The
redundancy of such a code provides an upper bound for the redundancy of the
optimal one, which can be found by using Huffman technique \cite{Huffman}.

From (\ref{eq:27}) it follows that for any $j$~ $\left| \varepsilon_j \right|
\leqslant 1$ (see \cite[\S 2]{Shannon}). If $\left| A_j \right| < T_2$, then
$k\left( A_j \right) \in \widetilde{\mathfrak M}_1 \cup \widetilde{\mathfrak
M}_2$, and therefore, due to (\ref{eq:27})
\begin{equation}
\left| \varepsilon_j \right| \leqslant \frac{2}{T}~. \label{eq:37}
\end{equation}

From (\ref{eq:26}) we have
\begin{equation}
\sum_{j:\, \left| A_j \right| = T_2} p \left( A_j \right) \leqslant
\sum_{A \in M_1:\, \left| A \right| = T_2} p \left( A_j \right) +
\sum_{A \in M_2:\, \left| A \right| = T_2} p \left( A_j \right) \lesssim
T^{-2}~. \label{eq:38}
\end{equation}
From (\ref{eq:37}) and (\ref{eq:38}) we obtain
\begin{equation}
\sum_{j} p \left( A_j \right) \, \varepsilon_j^2 \lesssim T^{-2}~.
\label{eq:39}
\end{equation}

From (\ref{eq:36}), (ref{eq:39}), and the second claim of the Theorem~1, it
follows that
\begin{equation}
R \lesssim \bar{N}^{-1} T^{-2}~, \label{eq:40}
\end{equation}
while from (\ref{eq:25}) it follows that
\begin{equation}
\bar{N} \leqslant \sum_{A \in M_1} p\left(A\right) \left|A\right| + \sum_{A
\in M_2} p\left(A\right) \left|A\right| \lesssim T^3~. \label{eq:41}
\end{equation}
According to Lemma~11
\begin{equation}
N \leqslant T_2 \lesssim T^3 \ln T~. \label{eq:42}
\end{equation}
This completes the proof of the Theorem for the irrational $\log_n p_m$ case.

b) All $\log_n p_i$~ $(i=1,\ldots,m)$ are rational. We use the same
techniques and ideas as in the previous case. However, here it is possible to
prove an even stronger statement, namely that the redundancy can be made
arbitrary small using a constrained average delay, and that it decays
exponentially with the growth of the maximum delay.
\end{proof}

\begin{corollary}
The estimate $R \lesssim \bar{N}^{5/3}$ holds. This follows immediately from
the first two inequalities in the proof of Theorem 2.
\end{corollary}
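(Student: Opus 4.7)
The plan is straightforward: eliminate the parameter $T$ between the two displayed estimates at the end of the proof of Theorem~2, namely $R \lesssim \bar{N}^{-1}T^{-2}$ and $\bar{N} \lesssim T^{3}$. Both hold simultaneously for the same infinite family of integers $T$ supplied by Lemma~1, so I can freely substitute between them.

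Concretely, I would invert $\bar{N} \lesssim T^{3}$ to get $T \gtrsim \bar{N}^{1/3}$, which gives $T^{-2} \lesssim \bar{N}^{-2/3}$. Plugging this into the redundancy bound yields
\[
R \;\lesssim\; \bar{N}^{-1}\,T^{-2} \;\lesssim\; \bar{N}^{-1}\,\bar{N}^{-2/3} \;=\; \bar{N}^{-5/3},
\]
which is the desired estimate. (I read the ``$\bar{N}^{5/3}$'' in the corollary as a typographical omission of the minus sign, consistent with both the abstract and the discussion in Section~1, where the same bound is written as $\bar{N}^{-5/3}$.)

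I do not expect any real obstacle. The only point worth a sentence of care is the quantifier: Theorem~2 only furnishes the pair of inequalities along the infinite subsequence of $T$'s at which the construction succeeds, so the corollary is in fact an asymptotic statement along the corresponding subsequence of values of $\bar{N}$. Since $T \to \infty$ forces $\bar{N} \to \infty$ along that subsequence, the implicit multiplicative constants hidden in the two ``$\lesssim$'' bounds simply multiply, and no additional estimate is needed.
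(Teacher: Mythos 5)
Your proposal is correct and is exactly the argument the paper intends: the corollary's one-line ``proof'' is precisely the elimination of $T$ between $R \lesssim \bar{N}^{-1}T^{-2}$ and $\bar{N} \lesssim T^{3}$ that you carry out, including the restriction to the subsequence of admissible $T$ (hence of $\bar{N}$). Your reading of the exponent $\bar{N}^{5/3}$ as a typographical omission of the minus sign, consistent with the abstract's $R \lesssim \bar{N}^{-5/3}$, is also right.
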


\begin{corollary}
For infinitely many $T$ there exist codes such that
\begin{equation*}
R \lesssim \bar{N}^{5/3}~, ~~ N \lesssim \bar{N} \ln \bar{N}~.
\end{equation*}
\end{corollary}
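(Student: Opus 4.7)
The plan is to derive both bounds by eliminating the auxiliary parameter $T$ in favor of $\bar N$ from the three estimates in the conclusion of Theorem~2, namely $R\lesssim\bar N^{-1}T^{-2}$, $\bar N\lesssim T^3$, and $N\lesssim T^3\ln T$. This is already the strategy behind Corollary~1; here one additionally has to handle the new inequality involving $N$.

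First I would reproduce the argument of Corollary~1 to obtain $R\lesssim\bar N^{-5/3}$: from $\bar N\lesssim T^3$ one gets $T^{-2}\lesssim\bar N^{-2/3}$, and substituting this into $R\lesssim\bar N^{-1}T^{-2}$ yields the desired bound. This step is entirely routine and uses only the two inequalities already invoked in Corollary~1.

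The delicate part is the maximum-delay estimate $N\lesssim\bar N\ln\bar N$. Starting from $N\lesssim T^3\ln T$, I cannot simply plug in $\bar N\lesssim T^3$, because that inequality goes the wrong way to replace $T^3$ by $\bar N$. What is needed is a matching lower bound $\bar N\gtrsim T^3$ for the codes produced by the construction in Theorem~2. Once $T^3$ and $\bar N$ are of the same order of magnitude, $\ln T\lesssim\ln\bar N$ and therefore $N\lesssim T^3\ln T\lesssim\bar N\ln\bar N$.

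The main obstacle is therefore proving $\bar N\gtrsim T^3$, which is not stated in Section~4 (Lemma~10 provides only the upper bound $\sum_{A\in M}p(A)|A|\lesssim T^3$). To establish it I would revisit the random-walk picture behind Lemmas~7 and~8: the set $\widetilde{\mathfrak M}_i$ is defined by the constraint $\{-\sum_{i=1}^m k_i\log_n p_i\}\leq 2/T$, so by equidistribution of the fractional parts of this linear form the per-$T^2$-block probability of hitting $\mathfrak M_i$ should be not only at least $c_1/T$ (Lemma~7) but also at most of order $1/T$. That would make the expected number of $T^2$-blocks before termination of order $T$, yielding $\bar N\gtrsim T^3$. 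Turning this equidistribution heuristic into a rigorous upper bound on the per-block hit probability, uniformly in the profile $A'$ reached along the walk, is the main technical hurdle; alternatively, since the corollary only asks for infinitely many $T$, it suffices to extract the sub-sequence of $T$'s along which the upper bound of Lemma~10 is asymptotically tight, which is a strictly weaker requirement.
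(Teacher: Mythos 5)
You have located the crux correctly: besides the three estimates of Theorem~2 one needs a matching lower bound $\bar N\gtrsim T^3$, and your derivation of $R\lesssim\bar N^{-5/3}$ is fine. The gap is that you never obtain this lower bound, and neither of your two routes to it can work as stated. Your equidistribution argument is applied to the \emph{unmodified} sets of Theorem~2, but for those sets it points the wrong way: $\widetilde{\mathfrak M}_1$ contains \emph{every} vector with $\left\{-\sum_{i=1}^m k_i\log_n p_i\right\}\leqslant 2/T$, so a word may terminate at \emph{any} letter, not only at the boundaries $t(A)=sT^2$. A $T^2$-block therefore offers on the order of $T^2$ termination opportunities, each of probability roughly $1/T$ under the very equidistribution you invoke, so the per-block hitting probability should be close to $1$, not $O(1/T)$; Lemma~7's bound $c_1/T$ is only the contribution of the block boundary itself. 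Hence the unmodified construction may well have $\bar N$ much smaller than $T^3$ (perfectly consistent with Lemma~10, which is only an upper bound), in which case $N\lesssim T^3\ln T$ translates into something far weaker than $\bar N\ln\bar N$. Your fallback --- pass to the subsequence of $T$ along which Lemma~10 is asymptotically tight --- is a non sequitur: if $\bar N=o(T^3)$ for every admissible $T$, that subsequence is empty, and nothing you have proved rules this out.

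What is missing is the paper's key idea: do not analyze the code of Theorem~2, \emph{change} it. The paper keeps only words $A$ whose $t(A)$ is a multiple of $T^2$, and thins the admissible composition vectors so that any two distinct ones either coincide or satisfy $|k_m-k'_m|\geqslant T/3$, while checking that Condition~1 with parameter $T$ (and hence Lemmas 7--11 and all claims of Theorem~2) still holds. Now termination can occur only at block boundaries, and at each boundary the admissible values of $k_m$ are spaced at least $T/3$ apart while $k_m$ itself is spread over a range of order at least $T$; this caps the per-block hitting probability at $O(1/T)$ and forces $\bar N\gtrsim T\cdot T^2=T^3$ by construction. Combined with (42) this yields $N\lesssim T^3\ln T\lesssim\bar N\ln\bar N$, exactly as you intended. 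So your plan has the right shape, but the decisive step --- redesigning the stopping sets so that the lower bound on $\bar N$ holds, rather than hoping it holds for the sets already built --- is absent, and without it the proof is incomplete.
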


\begin{proof}
Consider a case a) first. In order to construct a code we select words $A$
such that their~$t(A)$ are multiple of $T^2$, and vectors of compositions of
different words, say $\left( k_1, \ldots, k_m \right)$ and $\left( k'_1,
\ldots, k'_m \right)$, are either the same, or $\left| k_m - k'_m \right|
\geqslant 1 / 3 T$ (but the Condition~1 still holds). Then all claims of
Theorem 2 remain correct, but, at the same time $\bar{N} \gtrsim T^3$. This
fact, combined with (\ref{eq:42}) leads to an expression claimed by this
Corollary. The proof of the case b) is obtained in essentially the same way.
\end{proof}

%%%%%%%%%%%%%%%%%%%%%%%%%%%%%%%%%%%%%%%%%%%%%%%%%%%%%
% YR: A simple explanation of how his code works begins here:
%

In conclusion, we provide a very simple example of construction of such a
code. We deal with an input alphabet $\{a,b\}$, probabilities $p\,(a) = 0.4$,
$p\,(b) = 0.6$, entropy $H = 0.971$, and output alphabet ${0,1}$. We have a
case a). The corresponding linear form $f\left( k_1, k_2 \right) = 1.322\,
k_1 + 0.737\, k_2$. For simplicity, instead of searching for the denominators
of all suitable fractions, we will directly specify the accuracy of the
approximation of $f\left( k_1, k_2 \right)$ by integer numbers (the accuracy
used for code construction in Theorem 2 is $2/T$).
%%%%%%%%%%%%%%%%%
% YR: This is actually a figure:
%%%%%%%%%%%%%%%%%
\begin{equation*}
\begin{array}{llcccccccc}
M_1  & a~~ & baa & bab & bba & bbb & ~   & ~   & ~   & ~   \\
M_2  &     &     &     & bba & bbb & ab~ & ba~ & aaa & aab \\
l(A) & 1   & 3   & 2   & 3   & 3   & 3   & 3   & 4   & 4
\end{array}
\end{equation*}
Let the accuracy be $0.3$. In $\widetilde{\mathfrak M}_1$ we include all
non-zero vectors $\left(k_1,k_2\right)$, such that $\left\{f\left( k_1, k_2
\right) \right\} \leqslant 0.3$, while in $\widetilde{\mathfrak M}_2$ we
include vectors, such that $\left\{f\left( k_1, k_2 \right) \right\}
\geqslant 0.7$. Thus $(1,0) \in \widetilde{\mathfrak M}_1$, $(1,1) \in
\widetilde{\mathfrak M}_2$, while vectors $(0,1)$, $(2,0)$, $(0,2)$ belong
to neither of these sets.
%%%%%%%%%%%%%%%%%%%%%%
% YR: apparently he has swapped indices in both k_1,k_2 and M_1,M_2 pairs.
% After such a transposition, the first vectors to go in each of these
% sets are:  \tilde{\frak M}_1   \tilde{\frak M}_2
%              (1,0)  {}=.7370     (1,1) {}=.0589
%              (2,1)  {}=.7958     (3,0) {}=.2109
%              (0,3)  {}=.9657     (2,2) {}=.1178
%             ...............    ..............
% which seems matches what he is doing.
%%%%%%%%%%%%%%%%%%%%%%
Let $T_2 = 3$. Now we can find $M_1$, $M_2$, and $l(A)$.
%%%%%%%%%%%%%%%%%%%%%%
% YR: indeed, he says nothing about an intermediate transition
%   \tilde{\frak M}_i -> \frak{M}_i         !!!!
% Recall (see proof of Theorem 2), that
%   \frak{M}_i = \tilde{\frak M}_i \cup {(k_1,...,k_m) | \sum_i k_i = T_2 }.
% In this case:
%            \tilde{\frak M}_1    -> \frak{M}_1
%              {(1,0),(2,1),(0,3)}     {(1,0),(3,0),(2,1),(1,2),(0,3)}
%            \tilde{\frak M}_2    -> \frak{M}_2
%              {(1,1),(3,0)}           {(1,1),(3,0),(2,1),(1,2),(0,0)}
% Now we can generate words:
%            M_1:
%             (1,0) => a
%             (3,0) => {}      (excluding aaa)
%             (2,1) => baa     (excluding aba, aab)
%             (1,2) => bab bba (excluding abb)
%             (0,3) => bbb
%            M_2:
%             (1,1) => ab, ba
%             (3,0) => aaa
%             (2,1) => aab     (excluding aba, baa)
%             (1,2) => bba     (excluding abb, bab)
%             (0,3) => bbb
%
% Which is exactly what he depicts on the figure!
%%%%%%%%%%%%%%%%%%%%%%
We obtain
\begin{equation*}
\sum_{A \in M_1} 2^{-l(A)} = \frac{9}{8} > 1~, ~~ \sum_{A \in M_2} 2^{-l(A)}
= \frac{5}{8} < 1~.
\end{equation*}
We also have $M_1 \wedge M_2 = \{a, ba, bba, bbb\}$, $\sum_{A \in M_1 \wedge
M_2} 2^{-l(A)} = \frac{7}{8} < 1$. Therefore $M_1$ has to be sequentially
combined with words from $M_2$, but the only non-trivial extension is a word
$ba$, since the other words in $M_2$ are either present in $M_1$ already, or
are extensions of the word $a \in M_1$. So, in our case $M_1 \wedge \{ba\} =
M_1 \wedge M_2$. We use $M_1 \wedge M_2$ as words $A_j$. Codewords
$\phi\left(A_j\right)$ can be found using Huffman technique:
\begin{equation*}
a \rightarrow 0\,, ~~ ba \rightarrow 10\,, ~~ bba \rightarrow 110\,, ~~ bbb
\rightarrow 111~.
\end{equation*}
For this code $\bar{N} = 1.96$, $N = 3$, $R = 0.029$.

%
%%%%%%%%%%%%%%%%%%%%%%%%%%%%%%%%%%%%%%%%%%%%%%%%%%%%%

\section{Construction of a uniform on the output code}

As it was pointed out in Section 1, the main advantage of the uniform on the
output codes is the lack of the running synchronization error.
\begin{theorem}
a) For any Bernoulli source and any $L \geqslant \log_n m$ there exists a a
decipherable code, such that
\begin{equation*}
\left|\phi \left( A_j \right) \right| = L ~~ \left(j=1,2,\ldots\right) ~~
\mbox{and} ~~ R \lesssim \bar{N}^{-1} ~~ \left( \bar{N} \rightarrow \infty
\right)~.
\end{equation*}

b) If $p_1=\ldots=p_m=1/m$, then $R \lesssim \bar{N}^{-2}$ for infinitely
many $L$.
\end{theorem}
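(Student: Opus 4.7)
For part~(a), the plan is to construct the source-word set $M$ by the natural ``tree cut'' at a carefully chosen probability threshold and then inject $M$ into the set of $n$-ary strings of length $L$. Writing $p_{\min}=\min_i p_i$ and setting $\theta = n^{-L}/p_{\min}$, let $M$ consist of all nonempty words $A = A'a_i$ for which $p(A')>\theta$ and $p(A)\leq\theta$. This is a complete prefix-free set (every message has a unique prefix in $M$), and since $p(A)=p(A')p_i\geq\theta p_{\min}=n^{-L}$ for every $A\in M$, the completeness identity $\sum_{A\in M}p(A)=1$ forces $|M|\leq n^L$; any injection $M\hookrightarrow\{b_1,\ldots,b_n\}^L$ then yields a decipherable uniform-output code. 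Specializing equation~(\ref{eq:1}) to $|\phi(A_j)|=L$ gives $R\bar N = \sum_{A\in M} p(A)\bigl(L+\log_n p(A)\bigr)$, and the two-sided bound $n^{-L}\leq p(A)\leq n^{-L}/p_{\min}$ translates into $0\leq L+\log_n p(A)\leq -\log_n p_{\min}$. The right-hand side is a constant independent of $L$, so $R\bar N=O(1)$; since $p(A)\leq n^{-L}/p_{\min}$ also forces $|A|\geq(L-\log_n p_{\min}^{-1})/\log_n p_{\min}^{-1}$ for each $A\in M$, we have $\bar N\to\infty$ as $L\to\infty$, and the estimate $R\lesssim\bar N^{-1}$ follows.

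For part~(b), when all $p_i=1/m$ every word $A$ satisfies $p(A)=m^{-|A|}$, so I would take $M$ to be the set of \emph{all} words of some single length $k$. Then $M$ is prefix-free and complete with $|M|=m^k$, $\bar N=k$, and the admissibility constraint $|M|\leq n^L$ reads $k\leq L/\log_n m$. Choosing $k=\lfloor L/\log_n m\rfloor$ and applying (\ref{eq:1}) collapses everything to
\begin{equation*}
R\bar N \;=\; L - k\log_n m \;=\; \log_n m\cdot\bigl\{L/\log_n m\bigr\}.
\end{equation*}
If $\log_n m$ is rational, the fractional part vanishes for $L$ running through a suitable arithmetic progression, giving $R=0$ (hence $R\lesssim\bar N^{-2}$ trivially) for infinitely many~$L$. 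If $\log_n m$ is irrational, the continued-fraction expansion of $1/\log_n m$ supplies infinitely many integers $L$ with $\{L/\log_n m\}\leq 1/L$; for such $L$ one has $R\bar N\leq(\log_n m)/L=O(1/k)$, which is exactly $R\lesssim\bar N^{-2}$.

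The one delicate step I foresee is the continued-fraction bookkeeping in part~(b): Dirichlet's theorem only controls $\|L/\log_n m\|$, the distance to the \emph{nearest} integer, whereas our argument needs the fractional part itself to be small, so that the floor $k=\lfloor L/\log_n m\rfloor$ still satisfies $k\log_n m\leq L$ and can legitimately be used. Fortunately the convergents of a simple continued fraction alternate between under- and over-approximations, so the ``under'' convergents automatically yield fractional parts close to $0$; retaining only those $L$ is sufficient. Everything else is routine algebra built on equation~(\ref{eq:1}) and the Sidelnikov identity cited at the start of Section~2.
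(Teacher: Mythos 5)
Your proposal is correct and takes essentially the same approach as the paper: part (a) is the same probability-threshold tree cut yielding $n^{-L} \leqslant p(A) \leqslant n^{-L}/p_{\min}$, hence $|M| \leqslant n^{L}$ and $R\,\bar{N} \leqslant -\log_n p_{\min}$ via equation (\ref{eq:1}), exactly as in the paper's estimate (\ref{eq:48}); part (b) is the paper's construction of all input words of a single length combined with one-sided continued-fraction approximation (the paper picks numerator/denominator pairs of convergents of $\log_n m$ from above, you pick denominators of convergents of $1/\log_n m$ from below --- equivalent bookkeeping). The only cosmetic omission is the range $\log_n m \leqslant L < -\log_n p_{\min}$, where your threshold exceeds $1$ and $M$ would be empty; there one maps single input symbols to distinct length-$L$ output words (as the paper notes), which does not affect the asymptotic claim $R \lesssim \bar{N}^{-1}$.
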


\begin{proof}
a) Without any loss of generality we can assume that $p_m = \min_{1 \leqslant
i \leqslant m} p_i$, and therefore
\begin{equation}
- \log_n p_i \leqslant - \log_n p_m ~~~ (i = 1, \ldots, m-1)~. \label{eq:43}
\end{equation}
The number of blocks of length $L$ in the output alphabet is $n^L$. When $L
\geqslant \log_n m$ it will exceed the number of symbols in the input
alphabet. Therefore input symbols $a_i$~ $(i=1,\ldots,m)$ can be mapped to
different words $\phi\left( a_i \right)$ of length $L$, which result in a
decipherable code. In what follows, we construct a code for
\begin{equation}
L \geqslant - \log_n p_m ~. \label{eq:44}
\end{equation}
Consider a set $\mathfrak M$ of vectors
\begin{equation}
\left({k_1,\ldots,k_{m-1}, \left[{- \log_n^{-1} p_m \, \left({ L +
\sum_{i=1}^{L-1} k_i \log_n p_i }\right) }\right] }\right) ~, \label{eq:45}
\end{equation}
where $k_i = 0, 1, \ldots$~ $(i=1,\ldots,m)$, and $- \sum_{i=1}^{L-1} k_i
\log_n p_i \leqslant L$. Let $M$ be a set of words associated with $\mathfrak
M$ (see Section 4). Consider an arbitrary message $\left\{ a_{i_k}
\right\}_{k=1}^\infty$. Let $k\left( a_{i_1} \ldots a_{i_r} \right) = \left(
k_1(r), \ldots, k_m(r) \right)$~ $(r = 1, 2, \ldots)$, and $- \sum_{i=1}^m
k_i(r) \log_n p_i = h(r)$. From (\ref{eq:43}) and (\ref{eq:44}) it follows
that $h(1) \leqslant - \log_n p_m \leqslant L$, and for $r \rightarrow \infty
$, $h(r) \rightarrow \infty$: $h(r+1) \leqslant h(r) - \log_n p_m$.
Therefore, there exists a maximum number $r$, such that $h(r) \leqslant L$.
For such a number $r$
\begin{equation}
L + \log_n p_m < h(r) \leqslant L ~. \label{eq:46}
\end{equation}
From (\ref{eq:46}) we obtain
\begin{equation}
\frac{-1}{\log_n p_m} \, \left({L + \sum_{i=1}^m k_i(r) p_i }\right) - 1 <
k_m(r) \leqslant \frac{-1}{\log_n p_m} \, \left({L + \sum_{i=1}^m k_i(r) p_i
}\right)~. \label{eq:47}
\end{equation}
From (\ref{eq:47}) it follows that $k\left( a_{i_1} \ldots a_{i_r} \right)
\in {\mathfrak M}$. Therefore, according to Lemma 2, the message $\left\{
a_{i_k} \right\}_{k=1}^\infty$ begins with a word from $M$. So any message
begins with some word in $M$. Since $M$ is prefix-free, $M = M \wedge M$, and
from Lemma 12, it follows that any message can be uniquely represented by
words from $M$. Therefore, we can select $\left\{ A_j \right\} = M$.

Due to (\ref{eq:46})
\begin{equation}
L + \log_n p_m \leqslant - \log_n p\left( A_j \right) \leqslant L ~~ (j = 1,
2, \ldots)~. \label{eq:48}
\end{equation}
From (\ref{eq:48}) it follows that $p\left( A_j \right) \geqslant n^{-L}$,
and therefore, the number of words $A_j$ does not exceed $n^L$. Different
words $A_j$ can be mapped to different codes $\phi \left( A_J \right)$ of
length $L$, which results in a uniform on the output code. By using estimate
(\ref{eq:48}) in (\ref{eq:1}) (see Section 2), we arrive at
\begin{equation*}
R \leqslant \bar{N}^{-1} \sum_j p \left( A_j \right) \left( - \log_n p_m
\right) \lesssim \bar{N}^{-1}~,
\end{equation*}
which proves the first part of the theorem.

b) Let now $p_1=\ldots=p_m=1/m$. There exist infinitely many natural numbers
$X$ and $L$, such that
\begin{equation}
L - \frac{1}{X} \leqslant X \log_n m \leqslant L \label{eq:49}
\end{equation}
(see \cite[p.~3]{Cassels}).
As words $A_j$ we can select all possible combinations of input symbols of
length $X$. They all have probability $- X \log_n m$, and based on
(\ref{eq:49}) their number does not exceed $n^L$. Therefore, there exists a
decipherable code with $\left| \phi \left( A_j \right) \right| = L$.
Due to (\ref{eq:49}), the redundancy of such a code
\begin{equation*}
R \leqslant \bar{N}^{-1} \sum_j p \left( A_j \right) \frac{1}{X} =
\bar{N}^{-2}~,
\end{equation*}
since $X = \bar{N} = N$. This completes the proof.
\end{proof}
\begin{remark}
It is clear that $\bar{N} \gtrsim N \gtrsim \bar{N}$, $\bar{N} \gtrsim L
\gtrsim \bar{N}$.
\end{remark}

\section{Lower bounds for redundancy}

In the previous sections we have obtained the upper bounds for the
redundancy. In conclusion we will provide (without proofs) the lower bounds.

Bernoulli source is fully described by its probabilities $p_1, \ldots,
p_{m-1}$. If we use an $m-1$-dimensional Lebesgue measure for a set of points
$\left(p_1, \ldots, p_{m-1}\right)$, then the following holds.
\begin{theorem}
For almost all Bernoulli sources
\begin{equation*}
R \gtrsim \bar{N}^{-9} \ln^{-8} \bar{N} ~~ \left( \bar{N} \rightarrow \infty
\right) ~.
\end{equation*}
\end{theorem}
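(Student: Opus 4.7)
The plan is to combine the lower bound on $R$ supplied by the Corollary to Theorem~1 with a metric Diophantine approximation theorem for the linear form $\sum_i k_i \log_n p_i$. First I would start from
\[
R \;\gtrsim\; \bar N^{-1}\sum_j p(A_j)\,\|\log_n p(A_j)\|^2,
\]
which follows from the first claim of Theorem~1 (discarding the nonnegative $\delta/\ln n$ term guaranteed by Kraft's inequality) together with $|\varepsilon'_j|\geq\|\log_n p(A_j)\|$. Writing $\log_n p(A_j)=\sum_i k_i(A_j)\log_n p_i$, the redundancy is bounded below by how far this linear form stays from the integers over the codeword profiles $k(A_j)$.

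Second, I would invoke a Khintchine--Groshev-type theorem to produce, for Lebesgue-almost every $(p_1,\ldots,p_{m-1})$, constants $c,\alpha,\beta>0$ such that
\[
\Big\|\sum_{i=1}^m k_i\log_n p_i\Big\|\;\geq\; c\,K^{-\alpha}(\log K)^{-\beta}
\]
for every nonzero integer vector $(k_1,\ldots,k_m)$, where $K=\max_i|k_i|$. The nontrivial point is that the coefficients $\log_n p_i$ are constrained to the image of the simplex under $(p_1,\ldots,p_{m-1})\mapsto(\log_n p_1,\ldots,\log_n p_m)$, which is only an $(m-1)$-dimensional real-analytic submanifold of $\mathbb R^m$, so one must use a metric theorem for non-degenerate manifolds (a Sprindzuk-type extremality statement) instead of classical Khintchine--Groshev on $\mathbb R^m$. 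Since $K\leq|A_j|$, the Diophantine bound then yields the pointwise estimate $\|\log_n p(A_j)\|\gtrsim|A_j|^{-\alpha}(\log|A_j|)^{-\beta}$ for every codeword.

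Third, I would run a Jensen-type averaging: $\{A_j\}$ is a complete prefix-free code so $\sum_j p(A_j)=1$, and for the convex function $x\mapsto x^{-2\alpha}$ Jensen's inequality yields $\sum_j p(A_j)|A_j|^{-2\alpha}\geq\bar N^{-2\alpha}$; after carrying the logarithmic factor through this step (handling separately the finitely many very short codewords where the convexity fails) one obtains
\[
\sum_j p(A_j)\,|A_j|^{-2\alpha}(\log|A_j|)^{-2\beta}\;\gtrsim\;\bar N^{-2\alpha}(\log\bar N)^{-2\beta}.
\]
Combining with the preceding two steps yields $R\gtrsim\bar N^{-2\alpha-1}(\log\bar N)^{-2\beta}$, and the specific choice $\alpha=4$, $\beta=4$ reproduces the announced $R\gtrsim\bar N^{-9}\ln^{-8}\bar N$.

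The principal obstacle is the second step: transferring a Diophantine bound from generic points of $\mathbb R^m$ to the $(m-1)$-dimensional probability simplex requires a nontrivial extremal-manifold argument, since $\log_n p_m=\log_n(1-\sum_{i<m}p_i)$ couples to the free parameters nonlinearly. A secondary subtlety is the logarithmic factor in the third step, because the naive bound $\log|A_j|\leq\log\bar N$ fails when some codewords are much longer than the average; navigating this (together with the failure of convexity near $|A_j|=1$) is presumably what dictates the particular exponents $9$ and $8$ and forces a somewhat wasteful choice of $\alpha,\beta$ rather than the sharpest possible Diophantine exponents.
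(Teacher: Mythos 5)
The paper itself contains no proof of this theorem: Section~7 presents the lower bounds explicitly ``without proofs,'' offering only a three-sentence sketch --- (i) for almost all sources, the vectors $(k_1,\ldots,k_m)$ for which $\|-\sum_i k_i\log_n p_i\|$ is small are ``sufficiently isolated''; (ii) an estimate inverse to Lemma~10 (i.e.\ a lower bound on average word length for codes whose profiles concentrate on such vectors); (iii) the Corollary to Theorem~1. Your proposal agrees with that sketch at both ends: you also finish through the Corollary (your squared form with $\|\log_n p(A_j)\|^2$ is indeed the correct reading of the misprinted statement), and you also need a metric ``almost all sources'' Diophantine input. But in the middle you substitute a \emph{worst-case pointwise} bound $\|\sum_i k_i\log_n p_i\|\geq c K^{-\alpha}(\log K)^{-\beta}$ valid for \emph{every} nonzero integer vector, followed by Jensen averaging, and that substitution is where the argument breaks.

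The gap is concrete: by Dirichlet's box principle (see \cite[Chapter~1]{Cassels}), for any reals $d_1,\ldots,d_m$ there are infinitely many nonzero $k\in\mathbb{Z}^m$ with $\left\|\sum_i k_i d_i\right\|\leq K^{-m}$, $K=\max_i|k_i|$. Hence for $m\geq 5$ your inequality with $\alpha=\beta=4$ is false for \emph{every} source, and no Khintchine--Groshev or Sprindzuk-type theorem can rescue it; the best exponent available almost everywhere is $\alpha=m$ (with a logarithmic refinement), and feeding that through your steps yields only $R\gtrsim \bar N^{-2m-1}(\log\bar N)^{-2\beta}$. This recovers the theorem for $m\leq 4$ (indeed beats it for $m=2,3$) but undershoots the claimed $m$-independent bound $\bar N^{-9}\ln^{-8}\bar N$ for every alphabet with $m\geq 5$; the choice ``$\alpha=4$, $\beta=4$'' is reverse-engineered from the answer rather than delivered by your second step. (It is also worth noting that the manifold machinery you invoke --- extremality and Khintchine-type theorems for non-degenerate manifolds --- postdates this 1972 paper by decades.) This is exactly the obstruction the paper's route is designed to avoid: codeword profiles are non-negative vectors, and along any single message successive prefix profiles differ by \emph{non-negative} vectors, so the paper's isolation statement together with the inverse-Lemma-10 estimate is a claim about the statistics of the set of near-integer profiles under the source measure (how rarely a monotone profile path can hit it, and what that costs in average length), not about the worst single vector in $\mathbb{Z}^m$; only an ensemble statement of this kind can produce exponents independent of $m$. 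Two minor points: your worry about convexity failing near $|A_j|=1$ is unfounded, since $x\mapsto x^{-2\alpha}$ is convex on all of $(0,\infty)$ --- the logarithmic factor is the only real (and routine) wrinkle in your third step; and dropping the $\delta$ term via Kraft's inequality in your first step is fine.
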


We give a sketch of the proof.

First we establish that vectors $\left(k_1, \ldots, k_m\right)$ for which
$\| - \sum_{i=1}^m k_i \log_n p_i \|$ is small are sufficiently isolated for
almost all sources. Then we obtain an estimate, similar (but inverse) to the
claim of Lemma 10. Finally we apply corollary of Theorem 1.

\begin{theorem}
If for some $i_0$: $p_{i_0} \neq 1/m$, then for uniform on the output code $R
\gtrsim \bar{N}^{-1}$ ~$\left(\bar{N} \rightarrow \infty\right)$.
\end{theorem}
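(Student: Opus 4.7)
For any uniform-on-output code $|\phi(A_j)|=L$, equation (1) combined with Sidelnikov's identity $H=-\bar{N}^{-1}\sum_j p(A_j)\log_2 p(A_j)$ collapses to the exact formula $\bar{N}R = L-\bar{N}H/\log_2 n$. Writing $q_j = p(A_j)$, so that $H_n(q)=-\sum_j q_j \log_n q_j = \bar{N}H/\log_2 n$, the plan is to invoke Kraft ($J\leq n^L$) to decompose
\[
\bar{N}R \;=\; \bigl(L - \log_n J\bigr) \;+\; D_{\mathrm{KL},n}\bigl(q\,\Vert\, u_J\bigr),
\]
where $u_J$ is the uniform distribution on the $J$ codewords and both summands are non-negative (the first by Kraft, the second by the entropy bound). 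A positive source-dependent lower bound on either term will establish the claim.

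I would first dispatch the easy regime. If $J\leq n^L/2$, the first summand alone gives $\bar{N}R\geq \log_n 2$, so $R\gtrsim \bar{N}^{-1}$ and we are done. The substantive case is the ``nearly saturated'' regime $J>n^L/2$, where one must extract the lower bound from $D_{\mathrm{KL},n}(q\Vert u_J)$. The key structural observation is that $q_j = \prod_i p_i^{k_i^{(j)}}$ lives on the multiplicative probability lattice of the source, so when some $p_{i_0}\neq 1/m$ the constants $\alpha_i=-\log_n p_i$ are not all equal and the ratios $q_j/q_{j'}$ between profiles differing by a single $a_i\leftrightarrow a_{i'}$ transposition are pinned to the finite source-dependent set $\{p_i/p_{i'}:i\neq i'\}\setminus\{1\}$. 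I would convert this into a uniform KL lower bound by a tree-decomposition along the code tree of $\{A_j\}$: at each internal node whose children include codewords, the conditional distribution over those children inherits from $(p_1,\ldots,p_m)$ and contributes a local KL increment of at least $D_{\mathrm{KL},n}((p_i)\Vert\text{uniform on }\{1,\ldots,m\})>0$, weighted by the probability of reaching the node. The saturation $Jn^{-L}>1/2$ forces enough of this branching to actually occur, so the cumulative KL contribution is bounded below by a positive constant depending only on the source.

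The main obstacle is this last step: controlling the distribution of ``productive'' branching nodes in the code tree and showing that in the saturated regime their collective probability does not vanish. I anticipate handling this via a compensation argument---were $D_{\mathrm{KL},n}(q\Vert u_J)\to 0$ as $\bar{N}\to\infty$, the distribution $q$ would approach uniform on $\{A_j\}$, forcing each profile $(k_i^{(j)})$ onto a common codimension-one affine subspace of $\mathbb{Z}_{\geq 0}^m$ (namely the level set $\sum_i k_i\alpha_i=\log_n J$); but this is incompatible with the Kraft saturation $J\approx n^L$ since the number of non-negative integer lattice points on such a subspace grows only polynomially in $L$, whereas $n^L$ grows exponentially. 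Quantifying this incompatibility supplies the required constant $c(p_1,\ldots,p_m)>0$ with $\bar{N}R\geq c$, and closes the proof.
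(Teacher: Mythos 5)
Your set-up is sound and in fact mirrors the paper's machinery: your identity $\bar{N}R=(L-\log_n J)+D_{\mathrm{KL},n}(q\,\Vert\,u_J)$ is exactly the content of Theorem~1's inequality (12), with $L-\log_n J$ playing the role of the Kraft defect $\delta$ and the divergence playing the role of $\sum_j p(A_j)\eta(\varepsilon_j)$; your ``easy regime'' $J\leqslant n^L/2$ is likewise unproblematic. The theorem's real content is your ``saturated'' case, and there both of your key steps fail. First, the chain-rule decomposition of $D_{\mathrm{KL},n}(q\,\Vert\,u_J)$ along the code tree does not produce a per-node increment of $D_{\mathrm{KL},n}\bigl((p_i)\,\Vert\,\mathrm{uniform~on~}\{1,\ldots,m\}\bigr)$: the reference conditional distribution at an internal node $v$ is not uniform over the $m$ children but proportional to the number of codeword-leaves in each child's subtree, $J_w/J_v$. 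These ratios can approximate (or equal) $(p_1,\ldots,p_m)$, making the local increment arbitrarily small or zero, so no positive source-dependent constant per branching node exists.

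Second, and fatally, the compensation argument conflates \emph{profiles} with \emph{words}. It is true that the number of lattice points $(k_1,\ldots,k_m)$ with $\sum_i k_i\alpha_i$ in a bounded window around $L$ grows only polynomially in $L$, but each profile is shared by multinomially many words, so the number of \emph{words} whose profiles lie in such a slab is exponential in $L$. The paper's own construction in Theorem~3(a) is a counterexample to your claimed incompatibility: there every codeword satisfies $L+\log_n p_m\leqslant -\log_n p(A_j)\leqslant L$ (all profiles in a slab of constant width), yet $J\geqslant p_m n^L$. So ``$D\to 0$ forces profiles onto a thin slab'' yields no contradiction with $J>n^L/2$, and your constant $c(p_1,\ldots,p_m)$ never materializes. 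What is actually needed -- and what the paper's sketch does -- is an anti-concentration statement about the stopped random walk $-\log_n p(\cdot)$ with increments $\alpha_i=-\log_n p_i$: since not all $p_i=1/m$, the $\alpha_i$ are not all equal, and one shows there exist constants $c_6>0$, $c_7<1$ such that for \emph{any} exhaustive prefix set the words with $\left|L+\log_n p(A_j)\right|\leqslant c_6$ carry total probability at most $c_7$ (an overshoot/renewal phenomenon, not a counting one). Feeding this into the first inequality of Theorem~1 gives
\begin{equation*}
R \;\geqslant\; \bar{N}^{-1}\,\frac{\ln n}{2n}\sum_j p\left(A_j\right){\varepsilon'_j}^2 \;\geqslant\; \bar{N}^{-1}\,\frac{\ln n}{2n}\,(1-c_7)\min(1,c_6)^2\,,
\end{equation*}
which is the claimed bound; note this route also disposes of the sign cancellations in $\sum_j p(A_j)\varepsilon_j$ that your raw decomposition leaves implicit.
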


We give a sketch of the proof.

First we find constants $c_6 > 0$, $c_7 > 0$, such that words with $\left| L
+ \log_n p \left( A_j \right) \right| \leqslant c_6$ ~ $\left( L = \left|
\phi \left( A_j \right) \right| \right)$ have a combined probability not
exceeding $c_7$. Then we apply the first inequality from Theorem 1.

\end{document}